\pgfplotsset{compat=newest}
\pgfplotsset{plot coordinates/math parser=false}
\newlength\figureheight
\newlength\figurewidth
\newtheorem{theorem}{Theorem}
\newtheorem{lemma}[theorem]{Lemma}
\newtheorem{proposition}[theorem]{Proposition}
\newtheorem{remark}[theorem]{Remark}
\DeclareMathOperator{\sign}{sgn}
\begin{document}

\title{Consumer Privacy and Serial Monopoly\thanks{Thanks to Konrad Mierendorff, Roland Strausz and seminar audiences at Bonn, ESSET Gerzensee, Humboldt University, EUI, EIEF, University of Bristol, Royal Holloway, QMUL and UCLA for helpful comments. Bhaskar thanks  the National Science Foundation, for its support via  grant \# SES 1629055.}}
\author{V. Bhaskar \\
UT Austin \and Nikita Roketskiy \\
UCL}
\maketitle
\thispagestyle{empty}
\begin{abstract}
We examine the implications of consumer privacy when  preferences today depend upon past consumption choices, and consumers shop from different sellers in each period. Although consumers are ex ante identical, their initial consumption choices cannot be deterministic. Thus ex post heterogeneity in preferences arises endogenously. 
Consumer privacy improves social welfare, consumer surplus and the profits of the second-period seller, while reducing the profits of  the first period seller,  relative to the situation where consumption choices are observed by the later seller. 

\textit{Keywords}: consumer privacy, dynamic demand, endogenous screening, nonlinear pricing.

\textit{JEL Codes}: D11, D43, L13
\end{abstract}
\newpage

\setcounter{page}{1}

\section{Introduction}

Do supermarkets encourage
wasteful purchases of goods with a limited shelf-life, such as fruit or
vegetables? Do fast-food restaurants induce over-consumption?
An important feature of consumers' preferences in these markets is
inter-temporal substitutability: if a patron has a heavy lunch, she is less hungry at dinner. If a consumer buys a two-for-one deal on salads or ready meals at the supermarket, he is less likely to buy similar goods when stopping at the local store. In these instances, purchases at different dates are substitutes. A different set of examples concerns habit formation or goods where taste can gradually develop. A student who frequents music concerts is more likely to enjoy them later in life, so that consumptions across dates are complements. Here, the concern is that the  club or venue today may not take into account the effect of its decisions upon  clubs in other cities tomorrow. 
In both types of example, purchases at
different dates are often from different suppliers. In the context of grocery stores and supermarkets, multi-store shopping is a widespread and well-documented phenomenon: a large share of consumers visit more than one store, often on different days.\footnote{  \cite{fox2004consumer} and \cite{thomassen2017multi} document evidence of multi-stop shopping in the US and  in the UK respectively.}

In this paper, we study optimal nonlinear pricing in the presence of either intertemporal substitutability or complementarity. Consumers' privacy plays an important role in our analysis. It has a profound effect on equilibrium per-unit price dispersion, social welfare and the allocation of surplus between the market participants. In particular, consumers are better off and social welfare is higher when transactions are private. 

We develop a model in which consumer's willingness to pay for a good depends on past  consumption, and in which a consumer who shops with a supplier today is unlikely to return tomorrow. To focus on the dynamic implications of endogenous choices, we assume that consumers are identical--- differences in taste 
arise only  due to differences in past consumption. Furthermore, we assume that a supplier at any date has monopoly power---e.g., because of search frictions---so that the market is characterized by serial monopoly. We assume that consumers have a quasilinear utility and that willingness to pay for today's consumption $y$ depends on the past consumption $x$, and is given by the function $u(x,y)$. When consumptions are substitutes,  $u$ is strictly submodular; when they are complements, $u$ is strictly supermodular.

We allow sellers to offer unrestricted non-linear prices and analyze the nature of inter-temporal competition. We consider a simple two-period model. In the benchmark case, where  consumption today is observed by the tomorrow's seller, our intuitions are confirmed---the first period seller  induces over-consumption relative to the efficient allocation when $u$ is submodular, and under-consumption when $u$ is supermodular. The intuition is straightforward: since the future seller will extract the buyer's surplus, the monopolist today seeks to induce the consumption level that maximizes the consumer's utility when she exercises her outside option tomorrow.\footnote{This finding is reminiscent of the literature on long-term contracts, e.g. \cite{DiamondMaskin79} or \cite{AghionBolton87}, where payoffs and utility are time separable. Here, contracts are short-term, but future utility depends on current consumption.}  This is  larger than the efficient amount when $u$ is submodular, and  smaller when $u$ is submodular.

Our main focus however, is on the more realistic case where neither the consumer's past consumption nor the past price offers are observed by the current seller. Our first result is that there cannot be a pure strategy equilibrium where consumption is deterministic, both when $u$ is submodular and when it is supermodular. 
The intuition for this is both simple and subtle: if first period consumption is deterministic and second period purchases are positive,  then the consumer must be indifferent between the second-period seller's offer and her outside option, since the former fully extracts the consumer's surplus. 
However, in this case, seller 1 and the consumer can increase their joint surplus, either by increasing first period consumption or by reducing it.  More generally, if the consumer is indifferent between her consumption level and seller 2's offer in \emph{any equilibrium}, pure or mixed, then seller 2 must exclude this consumer, i.e. seller 2's offer to this consumer must be  zero. 
Consequently, in any equilibrium, first period consumption choices must be random, thereby generating private information and informational rents for consumers in the second period, as well as the required exclusion for the consumers with the lowest marginal willingness to pay. 
Even if consumers are ex ante identical, the first period seller offers  a large set of quantities, giving rise to  ex-post taste heterogeneity.

Our main finding is that equilibrium outcomes are essentially unique.\footnote{All equilibria have the the same distribution over first and second period consumptions. They differ (possibly) only in terms of the distribution of payoffs between the first period seller and the consumer.} The first period monopolist offers a large menu, which ranges between the efficient quantity and that chosen in the observable consumption case. The consumer, who is indifferent between all bundles in the menu, chooses an item according to a continuous distribution with full support. This induces an endogenous screening problem in the second period, since the consumer has private information about her past consumption. We find that the consumer and the second-period seller benefit from unobservability, whereas the first-period seller loses, as compared to the observable consumption benchmark. Furthermore, consumer privacy unambigiously increases total welfare. 

The remainder of this paper is organized as follows. 
Section~\ref{sec:relatedliterature} discusses the related literature. 
Section~\ref{sec:model} sets out the model, and examines the case where the second period seller observes past consumption. 
Section~\ref{sec:privatepast} analyzes private transactions and establishes existence and essential uniqueness of an equilibrium where consumer heterogeneity arises endogenously, and examines how privacy affects welfare and consumer surplus. The final section concludes. All missing proofs can be found in the appendix.

\section{Related literature}\label{sec:relatedliterature}

Consumer privacy has become an important issue in the era of electronic records, big data and Internet shopping---see
\cite{AcquistiTaylorWagman16} for  a comprehensive survey of the topic. 
If past consumption decisions are observable, this may allow firms to identify the consumer's persistent type, thereby creating opportunities for price discrimination. 
When the consumer interacts repeatedly with the same firm,  \cite{Taylor04} shows that a naive consumer may be exploited by the firm. However, if the consumer is sophisticated, then the firm may want to commit to not utilizing personal data, in order to avoid the ratchet effect, an argument that is also made by \cite{Villas-Boas04}. In our setting, firms will not voluntarily make such a commitment, since making transactions public increases their monopoly power.  \cite{FudenbergTirole00} study behavior-based price discrimination, where the consumer's current choices reveals her relative preference for different brands. 

Our model differs considerably from this literature, since we assume ex ante homogeneity of consumers. Heterogeneity therefore arises only because of differences in past consumption choices, and these are endogenous. Thus conceptually, our work is more closely related to models with hidden actions, e.g.  work on static moral hazard with renegotiation (see \citet{FudenbergTirole90} and \citet{Ma91}). Similarly,  \citet{Gonzalez04} and \cite{Gul01} analyze the hold-up problem with unobservable investment. 

\citet{CalzolariPavan06} examine how upstream firms sell information about consumers' preferences and choices to downstream firms in an environment with indivisible goods and binary consumers' valuations. They characterize the conditions under which the upstream firms offer the consumers full privacy.

The inefficiencies we highlight  relate to the literature on long-term bilateral contracts in a multilateral environment. \citet{DiamondMaskin79} and \citet{AghionBolton87} show that 
a buyer-seller pair today may induce inefficiency  via long term contracts, in order to extract surplus from a future seller. 
Our contracts are static and the dynamics are induced
by the agent's preferences. Furthermore, our focus on private transactions differs from this literature, which assumes that the future seller observes the past contract. 

Our model bears some formal similarity with models of common agency
(\citet{BernheimWhinston86}; \citet{MartimortStole02}). The principals in these models correspond to our
sellers, and the agent to the consumer. A key difference is that 
the  consumer's  decisions are sequential in our model. When a consumer
receives an offer from a seller today, she does not have the option of
revising her purchases yesterday. We discuss this difference in more detail at the end of section \ref{sec-pse}. Whereas common agency models have a
plethora of equilibria and use refinements such as truthfulness to single
out a few, we find that equilibrium outcomes are essentially unique.

\section{The model}\label{sec:model}

The consumer, who lives for two periods, visits seller 1 in the first period
and seller 2 in the second period. Her utility is
\begin{equation*}
u(x,y)-p-q,
\end{equation*}where $x$ and $y$ is consumption in the first and second period
respectively, and $p$ and $q$ are the payments made to sellers $1$ and 2
respectively. The value of consumption in the second period depends on the
level of consumption in the first period. We assume that $u$
is strictly increasing, strictly concave and twice continuously
differentiable. We assume throughout that either \textbf{A1} or \textbf{A2} holds:
\begin{enumerate}
\item[\textbf{A1}:] $u(x,y)$ is strictly supermodular.
\item[\textbf{A2:}] $u(x,y)$ is strictly submodular.
\end{enumerate}
Given this assumption, we will dispense with the qualifier ``strictly" in the remainder of this paper.\footnote{That is, the statement ``$u$ is submodular'' should be read as ``$u$ is strictly submodular".}

Each seller has constant marginal cost $k$. Since a seller interacts with the consumer only for one period, he seeks to maximize his flow profit. We allow each seller to choose non-linear prices. Thus seller 1 chooses  an arbitrary lower semi-continuous 
function $p:\mathbb{R}_{+}\to \mathbb{R}_{+}$, where $p(x)$ is a price  for a bundle of size $x$, and similarly, seller 2 chooses a lower semi-continuous function $q:\mathbb{R}_{+}\to \mathbb{R}_{+}$. 

The socially efficient level of consumption $(x^{\ast
},y^{\ast})$ is defined as follows. Define $y^{\ast}(x)$ as the value of $y $ that solves $u_{1}(x,y)=k,$ if this equation has a positive
solution, and zero otherwise$.$ Since $u_{1}$ is strictly decreasing in $y,$
there is a unique value $y^{\ast}(x).$ 
This defines the function $y^{\ast}(x)$, and consider $x$ values such that $y^{\ast}(x)>0$; on this range, $y^{\ast}(.)$ is differentiable,  
strictly increasing when $u$ is supermodular and strictly decreasing when $u$ is submodular. 
Let $x^{\ast}$ be the value of $x$
that solves $u_{1}(x,y^{\ast}(x))=k$, and let $y^{\ast
}:=y^{\ast}(x^{\ast}).$ Since $u$ is strictly concave, there is unique
solution, so that the socially efficient level of consumption $(x^{\ast
},y^{\ast})$ is unique.

We will consider two distinct information structures, which differ only in the information observed by firm 2.
\begin{itemize}
\item First period consumption is observed by firm 2.
\item Transactions are private and firm 2 does not observe either offers or choices. 
\end{itemize}

Our notion of equilibrium is  sequential equilibrium: we require that the buyer's choices are sequentially rational. Further,  when seller 1's offer is unobserved by seller 2, then any deviation by seller 1 does not affect the buyer's beliefs about the pricing scheme that will be offered by seller 2.

\subsection{Observable consumption}

Consider  the situation where first-period consumption is
perfectly observed by the second-period seller. The pricing scheme offered by seller 1 may or may not be observed by seller 2, this does not matter;
 the price paid by the consumer in period one does not affect incentives in period two since the consumer has quasi-linear utility. 
In this case, seller 2 acts a Stackelberg leader. 
The quantity that seller 1 sells is chosen so as to maximize the joint payoff of seller 1 and the consumer, 
over the two periods, since seller 1 can extract all the surplus.  Since seller 2 is a monopolist, in period 2, the consumer gets no surplus in the second period, and her payoff is equal to that from choosing the outside option, i.e. it equals $u(x,0)$.\footnote{Seller
2 chooses $y=y^{\ast}(x),$ and sets $q=u(x,y^{\ast}(x))-u(x,0).$ Thus the
consumer's second period payoff $u(x,y^{\ast}(x))-q$ equals $u(x,0),$ since   payments made to seller 1 are already sunk.} The consumer's value  from consuming $x$ equals
\begin{equation*}
u(x,0)-p.
\end{equation*}

If the consumer does not buy from the first period monopolist, her overall
payoff equals $u(0,0)$ (since her second period continuation payoff after any $x$
always equals $u(x,0).$ Since the first period seller optimally sets $p$ so
that
\begin{equation*}
p(x)=u(x,0)-u(0,0),
\end{equation*}
the bundle that maximizes the first period profits when consumption is
observable, $x^{o},$ satisfies

\begin{equation*}
u_{1}(x^{o},0)=k.
\end{equation*}

We shall assume throughout:
\begin{enumerate}
\item[\textbf{A3}:]$y^{o}:=y^{\ast}(x^{o})>0.$
\end{enumerate}
This assumption ensures that the first seller cannot serve the customer alone and the second seller plays a meaningful role in this market. An alternative formulation for this assumption is $u_2(x^{o},0)>k$.

\begin{proposition}\label{prop-obs}
Suppose that first period consumption $x$ is observed by the second period seller. 
If  $u$ is  submodular, then  $x^{o}>x^{\ast}$ so that
the first period monopolist induces excessive consumption relative to the
first best. If $u$ is  supermodular, then $x^{o}<x^{\ast},$ so that 
the first period seller induces underconsumption relative to the first best.
Second period consumption is always (conditionally) efficient. 
\end{proposition}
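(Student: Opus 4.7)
The plan is to reduce the proposition to a comparison of two first-order conditions via the sign of the cross partial $u_{12}$ together with strict concavity of $u$. I would proceed in three steps.

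First, I would establish conditional efficiency of second-period consumption. Observing $x$, seller 2 posts a take-it-or-leave-it bundle that holds the consumer down to her outside option $u(x,0)$, so his profit-maximization problem reduces to $\max_{y} u(x,y) - u(x,0) - ky$. By strict concavity in $y$, the unique maximizer $y^{\ast}(x)$ satisfies $u_{2}(x,y^{\ast}(x)) = k$, which is precisely the conditionally efficient level.

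Second, I would write down the first-order condition for $x^{o}$. Anticipating that seller 2 will extract all of her second-period surplus, the consumer values a first-period bundle $x$ at $u(x,0)$ gross of $p$, so seller 1 can extract $p(x) = u(x,0) - u(0,0)$ and chooses $x$ to maximize $u(x,0) - u(0,0) - kx$. This yields $u_{1}(x^{o},0) = k$. The efficient allocation $(x^{\ast}, y^{\ast})$ maximizes $u(x,y) - k(x+y)$ and therefore satisfies $u_{1}(x^{\ast}, y^{\ast}) = k$ and $u_{2}(x^{\ast}, y^{\ast}) = k$, with $y^{\ast} > 0$ guaranteed by A3.

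Third, I would compare $x^{o}$ and $x^{\ast}$ using monotonicity. Both conditions set $u_{1}$ equal to $k$, but evaluate $u_{1}$ at different second arguments, namely $0$ and $y^{\ast} > 0$. Under supermodularity ($u_{12}>0$) we have $u_{1}(x^{\ast}, y^{\ast}) > u_{1}(x^{\ast}, 0)$, so $u_{1}(x^{o},0) > u_{1}(x^{\ast},0)$; strict concavity ($u_{11}<0$) then gives $x^{o} < x^{\ast}$. Under submodularity the inequalities flip, yielding $x^{o} > x^{\ast}$. No step is a real obstacle since everything reduces to calculus on unconstrained strictly concave problems; the only point requiring a touch of care is matching the second argument correctly when invoking $u_{12}$ to move from $(x^{\ast}, y^{\ast})$ to $(x^{\ast}, 0)$, and then translating that inequality into a comparison of first arguments via $u_{11}$.
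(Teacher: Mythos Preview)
Your proposal is correct and follows essentially the same approach as the paper: both arguments compare the first-order condition $u_{1}(x^{o},0)=k$ defining $x^{o}$ with the efficiency condition $u_{1}(x^{\ast},y^{\ast})=k$, and use the sign of $u_{12}$ together with strict concavity in $x$ to rank $x^{o}$ and $x^{\ast}$. The paper's proof is only an informal sketch, so your three-step decomposition is in fact more complete; the one small point you might spell out is that A3 delivers $y^{\ast}>0$ by contradiction (if $y^{\ast}=0$ then $u_{1}(x^{\ast},0)=k$ forces $x^{\ast}=x^{o}$, whence $y^{\ast}(x^{o})=0$, contradicting A3).
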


The proof of this  proposition is intuitive. First-period inefficiency arises from the fact that  the first period seller maximizes 
the consumer's second period outside option, $u(x,0)$, rather
than the consumer's utility associated with her actual consumption, $u(x,y^{o})$. When $u$ is supermodular, the maximizer of $u(x,0)$ is strictly less than the maximizer of $u(x,y^{\ast}(x))$, since $y^{\ast}(x^{o})>0$.  When $u$ is submodular, the maximizer of $u(x,0)$ is strictly less than $x^{\ast}$. 

Finally, note that firm 1 need not restrict the consumer to a singleton menu. 
It could, for example, offer a two-part tariff, where the price per unit is set at marginal cost $k$, and the fixed fee $A$ is such that $A+kx^o$ equals $p$, the optimal price for $x^o$ under the singleton menu.
The consumer will choose $x^o$, and so the outcome will be the same as in the proposition above. 

We note that both $x^{\ast}$, the efficient level,  and $x^o$, the consumption level when it is observable,  play an important role in the analysis when transactions are entirely private.

\section{Private Transactions}\label{sec:privatepast}

We now analyze  the main specification of our model, where  seller 2 can observe neither the offer made by seller 1 nor the consumer's choice in the first period.

\subsection{Non-existence of a pure strategy equilibrium}\label{sec-pse}

Our first result
is that there does not exist a pure strategy equilibrium in this case,
either in the supermodular or the submodular case.
To gain some intuition for this result, consider the case where $u$ is submodular.
Let us see why consuming $x^o$ in the first period is not an equilibrium. 
Suppose firm 2 believes that the consumer has indeed consumed $x^o$. 
In this case, firm 2 would offer $y^o$, at a price $q=u(x^o,y^o)-u(x^o,0)$. 
Given that this is firm 2's offer, the payoff of the coalition consisting of firm 1 and the consumer (Figure~\ref{fig:x0}, dashed line) 
from any $x<x^o$ is given by 
\[
u(x,y^{\ast}(x^o)) - q - kx.
\]
Since submodularity of preferences implies that the consumer strictly prefers $(y^{\ast}(x^o),q)$ to her outside option when $x<x^o$.
The derivative of the above expression with respect to $x$ is \emph{negative} at $x=x^o$, since $x^o$ maximizes $u(x,0)-kx$ (Figure~\ref{fig:x0}, solid line).
In other words, the Stackelberg outcome fails to be an equilibrium outcome when the first period action is not observable for a familiar reason---it is not a best response to the firm 2's action. 

This raises the question, why is there not a pure strategy Nash equilibrium where the consumer chooses some $\tilde{x}<x^o$?
In such a candidate equilibrium, firm 2 will offer $y^{\ast}(\tilde{x})$ at a price $q=u(\tilde{x}, y^{\ast}(\tilde{x}))-u(\tilde{x},0)$. Suppose now that firm 1 offers some $x\in (\tilde{x},x^o]$. If the consumer accepts this offer, then it will optimal for her to take the outside option in the second period. Consequently, the coalition consisting of the firm 1 and the consumer will get a payoff of 
\[
u(x,0) -kx.
\]
The derivative of this payoff with respect to $x$ is strictly positive, since $x<x^o$ (Figure~\ref{fig:xtilde}, dashed line). In other words, there cannot be an equilibrium pure consumption level below the Stackelberg level, since then there is an incentive to deviate upwards. 

\begin{figure}
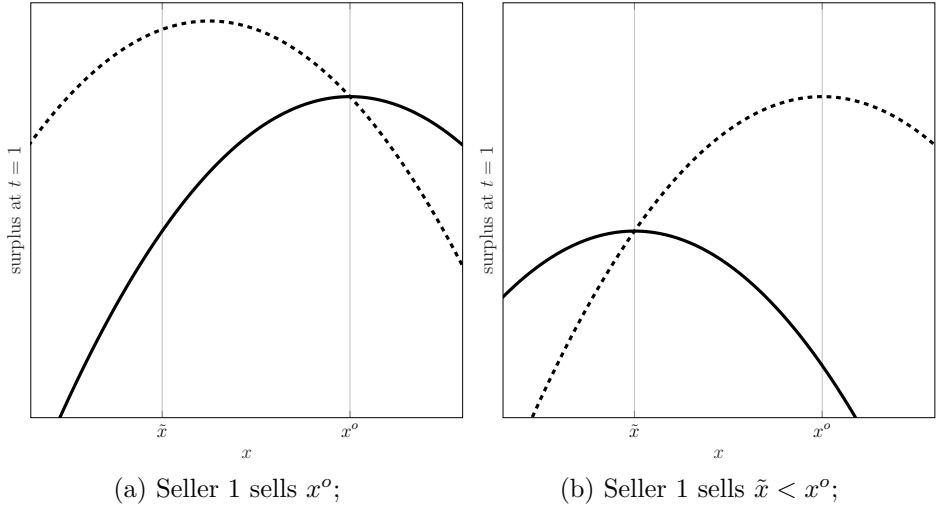

    \centering
    \subfloat[Seller 1 sells $x^{o}$;\label{fig:x0}]{
        \input{fig_x0.tex}}
	\subfloat[Seller 1 sells $\tilde{x}<x^{o}$;\label{fig:xtilde}]{
        \input{fig_xtilde.tex}}
\caption{Profitable deviations.}\label{fig:x}
\end{figure}

The fundamental problem is as follows. On the one hand, first period consumption $x$ must maximize $u(x,y)-kx$, since the consumer's actual consumption in equilibrium is the pair $(x,y)$. On the other hand, $x$ must maximize $u(x,0)-kx$, since the consumer's second  payoff in a pure strategy equilibrium must equal $u(x,0)$, the payoff that she gets from the outside option. 
Since no single value of $x$ can solve both these maximization problems when one has either strict supermodularity of submodularity, there cannot be a pure strategy equilibrium. 

\begin{proposition}\label{prop:nopurestr}
If $u(x,y)$ is strictly submodular or strictly supermodular, and assumption A3 is satisfied, there does not
exist an equilibrium where the consumption in period 1 is deterministic.
\end{proposition}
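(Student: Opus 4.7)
The plan is to argue by contradiction: suppose a pure-strategy equilibrium exists in which the consumer deterministically chooses first-period consumption $\bar{x}$. The first step is to pin down seller 2's equilibrium offer. Since seller 2's on-path beliefs concentrate on $\bar{x}$, sequential rationality forces him to propose $y^{\ast}(\bar{x})$ at price $q = u(\bar{x}, y^{\ast}(\bar{x})) - u(\bar{x}, 0)$, extracting the buyer's entire second-period surplus. An auxiliary step rules out the degenerate case $y^{\ast}(\bar{x}) = 0$: if seller 2 is inactive, seller 1 faces the observable-consumption problem of Proposition~\ref{prop-obs} and must choose $\bar{x} = x^{o}$, but A3 then forces $y^{\ast}(\bar{x}) = y^{o} > 0$, a contradiction.

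Next I would analyze the consumer's deviation incentives holding seller 2's offer fixed at $(y^{\ast}(\bar{x}), q)$---the key consequence of unobservability, which the equilibrium concept makes explicit. The consumer accepts this offer in period two if and only if $u(x, y^{\ast}(\bar{x})) - u(x, 0) \geq q$. Under strict submodularity the left-hand side is strictly decreasing in $x$ (since $y^{\ast}(\bar{x}) > 0$), so the acceptance region is $\{x \leq \bar{x}\}$; under strict supermodularity it is $\{x \geq \bar{x}\}$. Because seller 1 can extract all consumer surplus above the outside option through a suitable price, seller 1's profit maximization is equivalent to maximizing the joint surplus of the coalition of seller 1 and the consumer. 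This coalition surplus equals $u(x, y^{\ast}(\bar{x})) - kx - q$ on the acceptance branch and $u(x, 0) - kx$ on the rejection branch, and is continuous but kinked at $x = \bar{x}$.

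Optimality of $\bar{x}$ then requires the one-sided derivatives of joint surplus at $\bar{x}$ to point the right way. In the submodular case this gives $u_{1}(\bar{x}, y^{\ast}(\bar{x})) \geq k$ (no gain from a downward deviation along the accept branch) and $u_{1}(\bar{x}, 0) \leq k$ (no gain from an upward deviation along the reject branch). Strict submodularity together with $y^{\ast}(\bar{x}) > 0$ yields $u_{1}(\bar{x}, y^{\ast}(\bar{x})) < u_{1}(\bar{x}, 0)$, producing the impossible chain $k \leq u_{1}(\bar{x}, y^{\ast}(\bar{x})) < u_{1}(\bar{x}, 0) \leq k$. The supermodular case is symmetric: the accept and reject regions swap, the one-sided first-order conditions become $u_{1}(\bar{x}, 0) \geq k$ and $u_{1}(\bar{x}, y^{\ast}(\bar{x})) \leq k$, and strict supermodularity with $y^{\ast}(\bar{x}) > 0$ reverses the cross-partial inequality to deliver the analogous contradiction.

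The main conceptual obstacle is not the calculation but recognizing that unobservability pins seller 2's offer down at its equilibrium value, so the consumer's period-two action jumps discretely at $\bar{x}$ and joint surplus has a kink there. The two one-sided first-order conditions identify two distinct Stackelberg-type interior optima---one at which seller 2's offer is accepted, one at which it is rejected---and strict modularity of $u$ forces these two candidates apart, which is precisely what rules out a deterministic equilibrium.
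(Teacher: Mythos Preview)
Your argument is correct and follows essentially the same route as the paper: pin down seller~2's on-path offer $(y^{\ast}(\bar x),q)$, observe that the coalition surplus of seller~1 and the consumer is kinked at $\bar x$ because the consumer switches between accepting and rejecting there, and derive incompatible one-sided first-order conditions from strict sub/supermodularity. The paper packages this as a more general lemma (any time the consumer is indifferent between two distinct second-period options $(q_1,y_1)$ and $(q_2,y_2)$, some seller has a profitable deviation) and then applies it with $(q_1,y_1)=(0,0)$ and $(q_2,y_2)=(q,y^{\ast}(\bar x))$, but the core kink argument is identical to yours.

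One point to tighten: you write that the coalition surplus \emph{equals} $u(x,y^{\ast}(\bar x))-kx-q$ on one side and $u(x,0)-kx$ on the other. Strictly speaking, seller~2's equilibrium menu need not be the singleton $\{(y^{\ast}(\bar x),q)\}$; it could contain additional items that a deviating consumer might select. What you actually have is that both $(y^{\ast}(\bar x),q)$ and the outside option are available, so the coalition surplus $\Sigma(x)$ is bounded below by your piecewise expression $\tilde\Sigma(x)$, with equality at $\bar x$. Since $\tilde\Sigma$ fails to be maximized at $\bar x$ (by your one-sided derivative argument), neither is $\Sigma$, and the contradiction goes through. The paper's formal proof makes exactly this lower-bound step explicit.
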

\begin{proof}
See appendix \ref{sec:nopurestr}.
\end{proof}

This argument can be generalized beyond pure strategies: in equilibrium, the consumer cannot be indifferent between two different menu items offered by seller 2, as we show in Lemma~\ref{lemma:deviations} in Appendix \ref{sec:nopurestr}.

Since our model bears some  resemblance to the common agency model  under complete information \citep{BernheimWhinston86}, it may be illuminating to see why the truthful equililibrium of that model fails to be an equilibrium in our setting. In the common agency version of our model, both sellers choose non-linear price schedules simultaneously; the consumer observes both schedules before choosing  from each. In a truthful equilibrium, both sellers offer two-part tariffs with the per-unit price equal to marginal cost and the consumer chooses the efficient bundle $(x^*,y^*)$.  When $u$ is submodular,\footnote{The argument can be made similarly for the case when  $u$ is supermodular.}  the fixed tariff that firm 2  charges satisfies:\footnote{Symmetrically, $A_1 = u(x^*,y^*)-u(0,y^\dagger)-k[y^*-y^\dagger] - kx^*$, where $y^\dagger$ is the maximizer of $u(0,y)-ky$.}
\[
A_2  = u(x^*,y^*)-u(x^o,0)-k[x^*-x^o] - ky^*.
\]
Firm 2 cannot charge more than this since the consumer will respond by rejecting firm 2's offer and increasing her consumption from firm 1 to $x^o>x^*$. 

In our model, the firms effectively choose actions simultaneously, since firm 2 does not observe firm 1's actions. 
However,  the consumer must choose from firm 1 before she sees the offer from firm 2. Consequently, in a candidate equilibrium  where firm 1 chooses $(A_1,k)$, and the consumer chooses $x^*$, firm 2 can deviate to 

\[
A^{'}_2 = u(x^*,y^*)-u(x^*,0)- ky^*. 
\]
The difference $A^{'}_2-A_2$ equals 
\[
[u(x^o,0)-kx^o] -  [u(x^*,0)-kx^*]>0,
\]
since $x^o$ is the unique maximizer of $u(x,0)-kx$.   
In other words, the common agency profile fails to be an equilibrium since the consumer cannot revise her choices at firm 1 after observing firm 2's offer.

\subsection{Endogenous screening: An overview}\label{sec:informal}

Having established that first period consumption must be random, we now construct an equilibrium with the following features. 
Seller 1  charges a  two-part tariff $(A,k)$,  where $A$ is the fixed fee, and the price per unit equals the marginal cost $k$. 
The consumer chooses period 1 consumption randomly, according to a continuous c.d.f. $F$ that has support the interval  $[\underline{x},\overline{x}]$.  
Since first period consumption $x$ affects the  consumers willingness to pay for the second period bundles, seller 2 offers an optimal  screening menu. It is convenient to think of this menu in the form of a direct mechanism $(\hat{y}(x),q(x))$, where the second period consumption $\hat{y}(x)$ is strictly decreasing in $x$. 
Under this menu, a consumer with first period consumption $x$ gets second period indirect utility $U(x)$,\footnote{$U(x):=u(x,\hat{y}(x))-q(x)$, and  the consumer's overall payoff equals $U(x)-kx-A$.} where $U(x)-kx$ is constant on the interval $[\underline{x},\overline{x}]$.  Consequently, due to the two-part tariff in the first period, with marginal price $k$, the consumer is indifferent between all bundles in this interval, and it is optimal for her to randomize according to the distribution $F$. 
Finally, equilibrium is ``essentially unique". 
In the supermodular case, the equilibrium outcome is indeed unique.  In the submodular case, if the second period seller caters to first period consumption levels smaller   than  $\underline{x}$, this may affect the division of payoffs between firm 1 and the consumer, by affecting the first-period outside option  of the consumer.  

Let us now examine how this randomization helps overcome the fundamental impossibility that arose with a pure strategy equilibrium. Consider, for illustrative purposes, the submodular case. Higher values of $x$ correspond to ``lower'' types, and the worst type is the consumer who has consumed $\overline{x}$.  Such a consumer is held to her outside option, and thus must be indifferent between accepting the offered second period bundle designed for her, $(\hat{y}(\overline{x}),q(\overline{x}))$ and consuming $y=0$. If $\hat{y}(\overline{x})$ was strictly positive, then the fundamental contradiction that arose in the pure strategy case would also arise here, since $\overline{x}$ cannot simultaneously maximize $u(x,\hat{y}(\overline{x}))-kx$ and $u(\overline{x},0)-kx$. However, if 
$\hat{y}(\overline{x})=0$, then no contradiction arises, and indeed, that is the resolution to this problem.
In other words, the induced distribution $F$ ensures the exclusion of the consumer who has consumed $\overline{x}$.  

More generally, given the allocation $\hat{y}(\tilde{x})$ for any type $\tilde{x}$, $\tilde{x}$ must maximize $u(x,\hat{y}(\tilde{x}))-kx$.
That is, given the second period consumption, the first period bundle chosen by the consumer must maximize her payoff, net of the marginal cost.  In other words, the second period consumption $\hat{y}(x)$  must satisfy:
\begin{equation}
u_1(x,\hat{y}(x)) =k, \forall x\in [\underline{x},\overline{x}].\label{eq:allocation-rule}
\end{equation}
Observe that this uniquely pins down the second period consumption $\hat{y}(x)$, and ensures that it is strictly decreasing when $u$ is submodular, and strictly increasing when $u$ is supermodular.

To summarize, the critical features for any equilibrium are as follows:

\begin{enumerate}
\item\label{item:opt1} $U(x)-kx$ is constant for every $x\in [\underline{x},\overline{x}]$. This ensures that the consumer and seller 1 are indifferent as to which element of $[\underline{x},\overline{x}]$ the consumer chooses.

\item The induced distribution of the first period consumption, $F,$ is such that seller $2$ finds it optimal to offer $U(x)$ for each $x\in [\underline{x},\overline{x}]$.

\item The second period consumption $\hat{y}(x)$ satisfies \ref{eq:allocation-rule}, and  is therefore strictly decreasing in the submodular case, and
strictly increasing in the supermodular case.

\item Finally, the endpoints of the interval $[\underline{x},\overline{x}]$ are pinned down by the
characteristics of the solution to the monopoly screening problem. Since
there is no distortion at the top, the second period consumption of the
highest type---e.g., \underline{$x$} in the submodular case---must be
optimal given $\underline{x}.$ Combined with point (\ref{item:opt1}) above, this implies
that $\underline{x}=x^{\ast},$ the first best level of consumption (when
utility is supermodular, the highest type corresponds to $\overline{x}$ which
must equal $x^{\ast})$. Since there is no informational rent at the
bottom---e.g., for type $\overline{x}$ in the submodular case---her consumption level must maximize the joint payoff of the consumer and seller 1 given
that she takes the outside option $0$ in the second period. This implies $\overline{x}=x^{o}.$
Thus, the first period consumptions span the range between first best and the equilibrium consumption in the case when the past history is
observable, while second period consumptions lie between $0$
and the first best consumption, $y^{\ast}.$
\end{enumerate}

\subsection{Equilibrium characterization}\label{sec:equilibrium}

In this section we formalize the ideas presented in Section~\ref{sec:informal}. We begin with the characterization of the continuation payoff in the second period and then we provide conditions that pin down the distribution of consumption in the first period.

Proposition~\ref{prop:nopurestr} has established that in any equilibrium, the first period consumption must be random. Let $X$ denote the support of the equilibrium
distribution of the first period consumption---$X$ is a closed set, by
definition. We shall also assume that every bundle in $X$ is offered and
chosen by the consumer.\footnote{That is, we assume that the set of chosen bundles is closed, so that every $x\in X$ has an associated pair $(\hat{y}(x),q(x))$ in the menu. This assumption is inessential, but simplifies the statement of some results.} Note that $X$ cannot contain $0$---in this case,
seller 1's profits must equal zero, and this cannot be optimal for seller 1 and
the consumer.

Denote a menu offered by seller 2 in equilibrium by $(\hat{y}(x), q(x))_{x\in X}$. The second period indirect utility of  the consumer after choosing bundle $x$ in the first period is
\[
U(x):=u(x,\hat{y}(x))-q(x).
\]
The sum of the payoffs for the consumer and seller 1 if the former chooses $x$ is
\[
\Sigma(x):=U(x)-kx.
\]

First, we extend $U$ so that it is defined on an open interval $I\supseteq X$
rather than just the chosen points, $X,$ where $I\subset (0,\infty ).$ For $z\in I-X,$ let
\[
U(z):=\sup_{x\in X}\{u(z,\hat{y}(x))-q(x)\}.
\]
Thus, $U$ is specified by prescribing optimal choices for all non-chosen types, and every
point in $X$ lies in the interior of $I.$

\begin{lemma}\label{lemma:differentiable}
$U(x)$ is differentiable at every chosen $x\in X.$
\end{lemma}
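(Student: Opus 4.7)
The plan is to show that $U$ is differentiable at each $x\in X$ with $U'(x)=u_1(x,\hat y(x))$, by sandwiching the difference quotient via an envelope-type argument.

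First I exploit the sub-tangent furnished by $x$ itself. Since $x\in X$ is a feasible choice in the supremum defining $U(z)$, the smooth function $\phi(z):=u(z,\hat y(x))-q(x)$ satisfies $\phi\le U$ on $I$ with equality at $z=x$, so $U-\phi$ attains its global minimum at $x$. A standard Dini-derivative argument then gives
\[
\liminf_{h\downarrow 0}\frac{U(x+h)-U(x)}{h}\;\ge\;\phi'(x)\;=\;u_1(x,\hat y(x))\;\ge\;\limsup_{h\uparrow 0}\frac{U(x+h)-U(x)}{h}.
\]

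For the reverse inequalities I would use the uniqueness of the type-$x$ consumer's period-$2$ choice. The discussion following Proposition~\ref{prop:nopurestr}, formalized as Lemma~\ref{lemma:deviations} in Appendix~\ref{sec:nopurestr}, rules out indifference between distinct menu items, so $x'=x$ is the unique maximizer of $x'\mapsto u(x,\hat y(x'))-q(x')$ on $X$. For $z$ near $x$, pick any (near-)maximizer $\bar x_z\in X$ of $x'\mapsto u(z,\hat y(x'))-q(x')$. Subtracting $U(x)\ge u(x,\hat y(\bar x_z))-q(\bar x_z)$ from $U(z)=u(z,\hat y(\bar x_z))-q(\bar x_z)$ and invoking the mean value theorem yields
\[
\frac{U(z)-U(x)}{z-x}\;\le\;u_1(\xi_z,\hat y(\bar x_z))\qquad(z>x),
\]
with $\xi_z$ between $x$ and $z$ (the reverse inequality holds for $z<x$). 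Upper hemicontinuity of the argmax correspondence, combined with uniqueness of the maximizer at $x$, forces $\bar x_z\to x$ as $z\to x$, so $u_1(\xi_z,\hat y(\bar x_z))\to u_1(x,\hat y(x))$. This delivers the reverse inequalities and hence $U'(x)=u_1(x,\hat y(x))$.

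The main obstacle is supplying the continuity inputs for Berge's maximum theorem. If continuity of $\hat y$ and $q$ on $X$ is not directly available, I would instead replace $X$ by the compact menu image $M:=\overline{\{(\hat y(x),q(x)):x\in X\}}\subset\mathbb R_+^2$ -- compactness follows because the consumer's participation constraint bounds $q$ and concavity of $u$ bounds $\hat y$ -- and write $U(z)=\max_{(y,q)\in M}[u(z,y)-q]$. The envelope argument then proceeds on compact $M$ with jointly continuous objective; uniqueness of the maximizer at $x$ in $M$ (ruling out limit-point ties) remains the substantive input that Lemma~\ref{lemma:deviations} supplies.
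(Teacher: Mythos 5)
Your proof is correct in outline, but only its first half coincides with the paper's; the second half takes a genuinely different route. The sub-tangent step is exactly the paper's: the feasibility of $(\hat y(x),q(x))$ for nearby types gives $D_{+}U(x)\ge u_1(x,\hat y(x))\ge D^{-}U(x)$. For the reverse Dini inequalities, however, the paper does not touch the second-period screening problem at all. It uses the \emph{first-period} optimality of $x$: since $x\in X$ is actually chosen in equilibrium, it maximizes $\Sigma(z)=U(z)-kz$ over the open interval $I$, so the one-sided necessary conditions give $D^{+}U(x)\le k\le D_{-}U(x)$; chaining these with the sub-tangent bounds forces all four Dini derivatives to coincide (and, as a by-product, yields $u_1(x,\hat y(x))=k$, the first-order condition used in Lemma~\ref{lemma:monotonicity}). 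This is the sense in which differentiability ``follows from the endogeneity of types,'' as the paper's remark emphasizes, and it requires no compactness of the menu, no uniqueness of the second-period maximizer, and no appeal to Lemma~\ref{lemma:deviations}. Your route is the classical ``envelope differentiability at points where the argmax is unique'' argument: it would establish differentiability even for an exogenous type $x$ at which the optimal report is unique, so it proves something slightly more general, but at the cost of exactly the inputs you flag --- compactness of the closed menu image (boundedness of $X$ and of $q$ is not yet available at this point in the paper) and single-valuedness of the argmax over the closure, for which Lemma~\ref{lemma:deviations} suffices only in combination with the paper's standing assumption that the set of chosen bundles is closed. Both of these are patchable, but the paper's argument shows they are unnecessary.
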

\begin{proof}
See appendix \ref{app:diff}.
\end{proof}

\begin{remark}
The property that $U$ is differentiable on $X$, the set of types that are chosen in equilibrium, follows from the endogeneity of types. With exogenous
types, it is well known that $U$ need not be everywhere differentiable.
Indeed, this observation is more general than the specific context of our
model.
\end{remark}

It is standard in theory of incentives that single-crossing and incentive
compatibility implies weak monotonicity. However, lemma \ref{lemma:differentiable}
allows a stronger result.

\begin{lemma}\label{lemma:monotonicity}
$\hat{y}(x)$ must satisfy
\begin{equation*}
u_{1}(x,\hat{y}(x))=k.
\end{equation*}Moreover, $\hat{y}(x)$ is strictly decreasing (resp. increasing) in $x$ if $u$
is submodular (resp. supermodular).
\end{lemma}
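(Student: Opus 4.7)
The plan is to derive the first-order condition $u_1(x,\hat{y}(x)) = k$ by combining three ingredients: (i) the fact that every chosen $x\in X$ must maximize the joint surplus $\Sigma(x)=U(x)-kx$ of seller 1 and the consumer over the extension interval $I$, (ii) the differentiability of $U$ on $X$ from Lemma~\ref{lemma:differentiable}, and (iii) an envelope argument that identifies $U'(x)$ with $u_1(x,\hat{y}(x))$. Monotonicity of $\hat{y}$ then follows directly from strict concavity of $u$ in $y$ together with the sign of $u_{12}$.

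First I would argue that $\Sigma$ is constant on $X$ and maximized there over $I$. In equilibrium the consumer is indifferent among all bundles in $X$, so there is a common continuation payoff $V$ with $p(x)=U(x)-V$, and seller 1's profit from selling $x$ equals $\Sigma(x)-V$. If some $z\in I$ satisfied $\Sigma(z)>\Sigma(x)$ for a chosen $x\in X$, seller 1 could replace $x$ in her menu with $z$ priced at $U(z)-V$, leave the consumer's utility unchanged, and strictly raise her own profit, contradicting optimality of her menu. Hence every chosen $x$ is an interior maximizer of $\Sigma$ on $I$.

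Next, since Lemma~\ref{lemma:differentiable} gives differentiability of $U$ at $x\in X$, the first-order condition yields $U'(x)=k$. To evaluate $U'(x)$, I would use the envelope inequality
\[
U(z)-U(x) \;\geq\; u(z,\hat{y}(x)) - u(x,\hat{y}(x)),
\]
which holds because $U(z)=\sup_{x'\in X}\{u(z,\hat{y}(x'))-q(x')\}$ and incentive compatibility of seller 2's mechanism gives $U(x)=u(x,\hat{y}(x))-q(x)$. Dividing by $z-x$ and letting $z\downarrow x$ and $z\uparrow x$ yields both one-sided bounds, so differentiability pins down $U'(x)=u_1(x,\hat{y}(x))$. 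Combining, $u_1(x,\hat{y}(x))=k$ for every $x\in X$.

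For the monotonicity statement, take $x_1<x_2$ in $X$. From $u_1(x_1,\hat{y}(x_1))=k=u_1(x_2,\hat{y}(x_2))$, submodularity ($u_{12}<0$) gives $u_1(x_2,\hat{y}(x_1))<u_1(x_1,\hat{y}(x_1))=k=u_1(x_2,\hat{y}(x_2))$, and since $u_{11}<0$ implies $u_1(x_2,\cdot)$ is strictly decreasing in $y$, we conclude $\hat{y}(x_2)<\hat{y}(x_1)$. The supermodular case is symmetric. The main obstacle is the deviation argument in step one, since it has to be made carefully enough to cover arbitrary equilibrium menus rather than only two-part tariffs; once $\Sigma$ is shown to be maximized on $X$, the envelope argument and the monotonicity conclusion are essentially routine given Lemma~\ref{lemma:differentiable}.
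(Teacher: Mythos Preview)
Your approach is essentially the paper's own: use that every chosen $x$ maximizes $\Sigma$, invoke differentiability of $U$ at chosen points (Lemma~\ref{lemma:differentiable}), apply the envelope identity $U'(x)=u_1(x,\hat y(x))$, and read off $u_1(x,\hat y(x))=k$. In fact the paper's proof of Lemma~\ref{lemma:differentiable} already establishes both that chosen $x$ maximize $\Sigma$ and that $U'(x)=u_1(x,\hat y(x))$, so its proof of the present lemma is even terser than yours; your first two steps just make explicit what the paper packages into the preceding lemma.

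There is, however, a slip in your monotonicity paragraph: you have interchanged the roles of $u_{11}$ and $u_{12}$. The inequality $u_1(x_2,\hat y(x_1))<u_1(x_1,\hat y(x_1))$ comes from strict concavity in the first argument ($u_{11}<0$), not from submodularity, since you are moving $x$ at fixed $y=\hat y(x_1)$. Conversely, ``$u_1(x_2,\cdot)$ strictly decreasing in $y$'' is exactly $u_{12}<0$ (submodularity), not $u_{11}<0$. With the labels corrected your chain
\[
u_1(x_2,\hat y(x_1))<k=u_1(x_2,\hat y(x_2)),\qquad u_{12}<0,
\]
does force $\hat y(x_2)<\hat y(x_1)$, and the argument goes through; the supermodular case is symmetric. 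This is the same monotonicity argument the paper gives, just with the second-derivative bookkeeping made explicit.
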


\begin{proof}
Since $U$ is differentiable at $x\in X,$ if $x$ maximizes $\Sigma(.),$ it must
satisfy
\begin{equation*}
\Sigma ^{\prime }(x)=u_{1}(x,\hat{y}(x))-k=0.
\end{equation*}Consider a case of submodular utility---i.e., $u_{21}<0.$ If $x>\tilde{x}$ then $\hat{y}(x)$ must be
strictly less than $\hat{y}(\tilde{x}),$ or otherwise the expression for $\Sigma^{\prime}(.)$ above will be strictly negative. Similarly, in the case of supermodular utility---i.e., if $u_{21}>0$,
 $\hat{y}(x)$ must be strictly greater than $\hat{y}(\tilde{x})$.
\end{proof}

Let $\underline{x}$ denote the minimal element in $X$ and $\overline{x}$ the maximal
element. The following lemma shows that if individual rationality is
satisfied for type $\overline{x}$ in the submodular case, then it is satisfied
for every other type---although familiar, the result is not immediate since
the outside option $u(x,0)$ is type dependent.  A similar result is true  for the case of supermodular utility.

\begin{lemma}
If $u$ is submodular, $U(x)-u(x,0)\geq U(\overline{x})-u(\overline{x},0)$ for all $ x\in X, x\neq\overline{x}$. Moreover,
under any profit maximizing second period contract, $U(\overline{x})=u(\overline{x},0),$
and the individual rationality constraint binds for type $\overline{x}.$ If $u$ is supermodular, the individual rationality constraint binds for type 
\underline{$x$}, and is slack for every other type in $X$.
\end{lemma}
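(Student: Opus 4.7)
The plan is to prove the submodular statement in two steps and then note that the supermodular case follows by the obvious symmetric argument. For the first inequality my approach is a ``deviate to $\overline{x}$'s bundle'' IC argument, with the twist that because the outside option $u(x,0)$ is type-dependent, the comparison requires an input from submodularity. For the second assertion I use a revenue-improvement deviation: if the IR at $\overline{x}$ were slack, seller 2 could raise every transfer uniformly and strictly increase her profit.

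Concretely, for the inequality I start from the IC constraint of type $x \in X$ against choosing $\overline{x}$'s menu item,
\[
U(x) \;\geq\; u(x,\hat{y}(\overline{x})) - q(\overline{x}),
\]
and subtract $u(x,0)$ from both sides. Since $u$ is submodular, the map $x \mapsto u(x,\hat{y}(\overline{x})) - u(x,0)$ is weakly decreasing, and strictly so when $\hat{y}(\overline{x})>0$. Because $\overline{x}=\max X$, every $x \in X\setminus\{\overline{x}\}$ satisfies $x<\overline{x}$, hence
\[
u(x,\hat{y}(\overline{x})) - u(x,0) \;\geq\; u(\overline{x},\hat{y}(\overline{x})) - u(\overline{x},0).
\]
Combining these two inequalities and using $U(\overline{x}) = u(\overline{x},\hat{y}(\overline{x}))-q(\overline{x})$ yields $U(x)-u(x,0) \geq U(\overline{x})-u(\overline{x},0)$, which is the first claim.

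For the binding assertion, I argue by contradiction: suppose $U(\overline{x}) > u(\overline{x},0)$. Then by the inequality just shown, the IR slack $U(x)-u(x,0)$ at every $x\in X$ is at least $U(\overline{x})-u(\overline{x},0)>0$. Let seller 2 replace each $q(x)$ by $q(x)+\varepsilon$ for some $\varepsilon>0$ strictly smaller than $U(\overline{x})-u(\overline{x},0)$. The shift preserves IC (every IC inequality shifts by $\varepsilon$ on both sides) and preserves IR (the new slack at every $x\in X$ is still strictly positive), while raising revenue by $\varepsilon$ per consumer, since the distribution $F$ over first-period consumption is taken as given in seller 2's problem. This contradicts profit maximization, so $U(\overline{x})=u(\overline{x},0)$ must hold.

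The supermodular case is entirely symmetric: the ``deviate to $\underline{x}$'s bundle'' step now exploits the fact that $x\mapsto u(x,\hat{y}(\underline{x}))-u(x,0)$ is weakly increasing, so $\underline{x}$ becomes the worst participation type, and the uniform-transfer deviation then rules out slack at $\underline{x}$ by the same logic. The only delicate point, rather than a genuine obstacle, is to verify that the uniform shift $q\mapsto q+\varepsilon$ is a feasible deviation: IC is automatic because it is invariant under a common additive shift, and IR is controlled by taking $\varepsilon$ below the smallest IR slack, which the first part of the lemma pins down as $U(\overline{x})-u(\overline{x},0)$ (respectively $U(\underline{x})-u(\underline{x},0)$ in the supermodular case).
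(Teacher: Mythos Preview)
Your proof is correct and follows essentially the same approach as the paper: the IC deviation to $\overline{x}$'s bundle combined with submodularity to control the type-dependent outside option, followed by the uniform upward shift of transfers to rule out slack at $\overline{x}$. The paper's write-up is slightly terser but the logic is identical, including the symmetric treatment of the supermodular case.
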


\begin{proof}
For $x<\overline{x},$ since type $x$ can pretend
to be $\overline{x},$  incentive compatibility implies that
\begin{equation*}
U(x)\geq U(\overline{x})+u(x,\hat{y}(\overline{x}))-u(\overline{x},\hat{y}(\overline{x})).
\end{equation*}

Since $U(\overline{x})\geq u(\overline{x},0),$

\begin{equation}
U(x)-u(x,0)\geq \lbrack u(\overline{x},0)-u(x,0)]-[u(\overline{x},\hat{y}(\overline{x}))-u(x,\hat{y}(\overline{x}))],  \label{IR-ineq}
\end{equation}

which is non-negative since $u$ is submodular and $\hat{y}(\overline{x})\geq 0.$

If $U(\overline{x})>u(\overline{x},0),$ then a menu $(\hat{y}(x),q(x))$ cannot be profit maximizing, since a uniform reduction in payoffs $U(x)$
by $U(\overline{x})-u(\overline{x},0),$ achieved by raising $q(x)$ by the same amount,
preserves incentive compatibility and increases profits. To obtain the same result for the case of supermodular $u$, replace $\overline{x}$ by \underline{$x$} in the above
argument.
\end{proof}

The following two lemmata identify $\underline{x}$ and $\overline{x}$. Recall that one of the bounds is identified using the fact that the highest type's consumption in the second period is efficient. In order to identify the other bound, we consider possible deviations by seller 1 and establish that the lowest type has to consume zero in the second period.

\begin{lemma}\label{lemma:xo}
If $u$ is submodular, $\overline{x}$ equals the value of $x$ that
maximizes $u(x,0)-kx$---i.e., $\overline{x}=x^{o},$ and $\hat{y}(\overline{x})=0.$ If $u$ is supermodular, $\underline{x}$ equals the value of $x$ that
maximizes $u(x,0)-kx$---i.e., $\underline{x}=x^{o},$ and $\hat{y}(\underline{x})=0.$
\end{lemma}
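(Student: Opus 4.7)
The plan is to identify the endpoint of $X$ at which second-period individual rationality binds with $x^{o}$ via a profitable deviation argument for seller 1, and then to read off $\hat{y}$ at that endpoint from Lemma~\ref{lemma:monotonicity} combined with strict concavity. I will carry out the submodular case; the supermodular case is symmetric, with $\underline{x}$ playing the role of $\overline{x}$.

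First I would observe that, since the consumer randomizes over $X$, $\Sigma(x)=U(x)-kx$ takes a common value $\Sigma^{*}$ on $X$, and combining this with $U(\overline{x})=u(\overline{x},0)$ from the previous lemma pins down $\Sigma^{*}=u(\overline{x},0)-k\overline{x}$. I would then argue by contradiction: suppose $\overline{x}\neq x^{o}$. Uniqueness of $x^{o}$ as the maximizer of $u(x,0)-kx$ yields $u(x^{o},0)-kx^{o}>\Sigma^{*}$. Construct the deviation in which seller 1 leaves his equilibrium tariff unchanged but adds the bundle $x^{o}$ at price $p(x^{o})=\tilde{U}(x^{o})-V-\varepsilon$ for small $\varepsilon>0$, where $V$ is the consumer's equilibrium overall payoff and
\[
\tilde{U}(x^{o}):=\max\Big\{u(x^{o},0),\,\sup_{\tilde{x}\in X}[u(x^{o},\hat{y}(\tilde{x}))-q(\tilde{x})]\Big\}\geq u(x^{o},0)
\]
is the consumer's best second-period response to having chosen $x^{o}$, evaluated while holding seller 2's continuation strategy fixed (this is where sequentiality does the work, since seller 1's deviation is unobserved by seller 2). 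For small $\varepsilon$, the consumer strictly prefers $x^{o}$, so seller 1's per-buyer profit becomes $\tilde{U}(x^{o})-kx^{o}-V-\varepsilon\geq u(x^{o},0)-kx^{o}-V-\varepsilon>\Sigma^{*}-V$, the equilibrium profit, a contradiction. Hence $\overline{x}=x^{o}$.

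With $\overline{x}=x^{o}$ in hand, $\hat{y}(\overline{x})=0$ follows at once: Lemma~\ref{lemma:monotonicity} supplies $u_{1}(x^{o},\hat{y}(\overline{x}))=k$, while the definition of $x^{o}$ gives $u_{1}(x^{o},0)=k$, and strict concavity of $u$ in its second argument ensures that $u_{1}(x^{o},\cdot)$ is injective, forcing $\hat{y}(\overline{x})=0$. The step I expect to be the main obstacle is the deviation argument: one must verify that the deviating price is non-negative (which follows from $\tilde{U}(x^{o})>V$ via strict monotonicity of $u$ in $x$, applied to whichever option achieves the consumer's equilibrium payoff $V$), that the $\varepsilon$-perturbation indeed induces the consumer to switch to $x^{o}$ rather than remain in $X$, and that sequentiality legitimately lets one hold seller 2's menu fixed under a secret deviation by seller 1.
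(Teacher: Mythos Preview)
Your approach is correct and takes a genuinely different route from the paper. The paper argues in the reverse order: it first invokes Lemma~\ref{lemma:deviations} (indifference between two distinct second-period options forces a profitable deviation) to conclude $\hat{y}(\overline{x})=0$, since the binding IR constraint at $\overline{x}$ makes the consumer indifferent between $(\hat{y}(\overline{x}),q(\overline{x}))$ and the outside option $y=0$; only then does it read off $\overline{x}=x^{o}$ from $u_{1}(\overline{x},0)=k$ via Lemma~\ref{lemma:monotonicity}. You instead establish $\overline{x}=x^{o}$ directly by a seller-1 deviation and back out $\hat{y}(\overline{x})=0$ afterwards. The paper's order has the virtue of recycling Lemma~\ref{lemma:deviations}; yours is more self-contained and avoids that machinery.

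Two small repairs are needed. First, the injectivity step is misattributed: strict concavity in the second argument means $u_{22}<0$, which says nothing about how $u_{1}(x^{o},\cdot)$ varies in $y$. What you need is strict submodularity, $u_{12}<0$, so that $y\mapsto u_{1}(x^{o},y)$ is strictly decreasing and hence injective; this is precisely assumption A2 (A1 in the supermodular case). Second, your price-nonnegativity justification (``strict monotonicity of $u$ in $x$, applied to whichever option achieves $V$'') tacitly uses $x^{o}\geq x$ for every $x\in X$, i.e.\ $\overline{x}\leq x^{o}$. This is true but deserves one line: Lemma~\ref{lemma:monotonicity} gives $u_{1}(\overline{x},\hat{y}(\overline{x}))=k$, and since $\hat{y}(\overline{x})\geq 0$ and $u_{12}<0$ we obtain $u_{1}(\overline{x},0)\geq k=u_{1}(x^{o},0)$, whence $\overline{x}\leq x^{o}$ by strict concavity in $x$. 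With $\overline{x}<x^{o}$ in the contradiction hypothesis, $\tilde{U}(x^{o})\geq u(x^{o},0)>u(\overline{x},0)=U(\overline{x})\geq V$ follows immediately (the last inequality because $p(\overline{x})\geq 0$).
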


\begin{proof}
If $u$ is submodular, since the second period participation constraint binds
for the highest value of $x$ that is offered by seller 1 and accepted by the
consumer, the consumer is indifferent between $\hat{y}(\overline{x})$ and zero in the second period. By Lemma~\ref{lemma:deviations}, there exists a profitable deviation for seller 1 unless $\hat{y}(\overline{x})=0$. Consequently, lemma~\ref{lemma:monotonicity} implies that $\overline{x}$ must equal the value of $x$ that maximizes
\begin{equation*}
u(x,0)-kx,
\end{equation*}
so that $\overline{x}=x^{o}$. The proof for the case of supermodular utility is identical.
\end{proof}

\begin{lemma}\label{lemma:x*}
If  $u$ is submodular, then $\underline{x}=x^{\ast}$; and if it is supermodular, then $\overline{x}=x^{\ast}$. In either case, \ $\hat{y}(x^{\ast})=y^{\ast}.$
\end{lemma}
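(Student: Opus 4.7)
The plan is to pin down the top-type's allocation by two independent first-order conditions and then invoke uniqueness of the efficient pair. By Lemma~\ref{lemma:monotonicity}, we already have $u_1(x,\hat{y}(x))=k$ for every $x\in X$, which (using submodularity or supermodularity) makes $\hat{y}$ strictly monotone on $[\underline{x},\overline{x}]$. Consequently the type with the highest marginal willingness to pay for $y$ in seller~2's menu is $\underline{x}$ when $u$ is submodular, and $\overline{x}$ when $u$ is supermodular. Since $(x^{\ast},y^{\ast})$ is the unique pair solving $u_1=k$ and $u_2=k$ (by strict concavity of $u$), it suffices to establish the ``no distortion at the top'' condition $u_2(x_{\text{top}},\hat{y}(x_{\text{top}}))=k$: combined with the marginal condition from Lemma~\ref{lemma:monotonicity}, this immediately yields $x_{\text{top}}=x^{\ast}$ and $\hat{y}(x_{\text{top}})=y^{\ast}$.

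To establish no distortion at the top in the submodular case, I would argue by a perturbation of seller~2's optimal menu taking $F$ as given. Suppose, for contradiction, $\hat{y}(\underline{x})\neq y^{\ast}(\underline{x})$, where $y^{\ast}(\underline{x}):=\arg\max_y[u(\underline{x},y)-ky]$. Replace the bundle assigned to type $\underline{x}$ with $(y^{\ast}(\underline{x}),q^{\dagger})$, where $q^{\dagger}$ is chosen so that $U(\underline{x})$ is unchanged. Then the IR constraint for $\underline{x}$ is unaffected (and, by the previous lemma, it was already slack), while seller~2's contribution from type $\underline{x}$ rises by exactly the improvement in joint surplus $u(\underline{x},y)-ky$, which is strictly positive by the maximality of $y^{\ast}(\underline{x})$. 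Incentive compatibility for any other $x\in X$ with respect to the perturbed bundle is preserved: by strict submodularity, $u(x,y^{\ast}(\underline{x}))-u(\underline{x},y^{\ast}(\underline{x}))\le u(x,\hat{y}(\underline{x}))-u(\underline{x},\hat{y}(\underline{x}))$ in the relevant direction, so a type $x>\underline{x}$ finds the efficient bundle at $\underline{x}$ no more attractive than the original one, and IC is inherited from the original menu. This contradicts optimality of the equilibrium menu, so $u_2(\underline{x},\hat{y}(\underline{x}))=k$.

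Combining with $u_1(\underline{x},\hat{y}(\underline{x}))=k$ from Lemma~\ref{lemma:monotonicity} and uniqueness of $(x^{\ast},y^{\ast})$, we get $\underline{x}=x^{\ast}$ and $\hat{y}(x^{\ast})=y^{\ast}$. The supermodular case is symmetric: $\hat{y}$ is strictly increasing, the screening ``top type'' is $\overline{x}$, and the same perturbation (applied at $\overline{x}$) yields $u_2(\overline{x},\hat{y}(\overline{x}))=k$, hence $\overline{x}=x^{\ast}$ and $\hat{y}(x^{\ast})=y^{\ast}$.

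The main obstacle is making the IC verification for the proposed perturbation airtight, especially when the direction of the candidate distortion would \emph{raise} $\hat{y}(\underline{x})$ above $y^{\ast}(\underline{x})$ (which is not the standard Mussa-Rosen direction). The single-crossing property implied by strict submodularity (or supermodularity) is what makes the argument go through in either direction: altering only the top type's bundle toward the efficient level never tightens the downward IC constraints of neighboring types because the efficient bundle for the top is, by the monotonicity of $u_2$ in $x$, weakly less attractive to the remaining types in $X$.
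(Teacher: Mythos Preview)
Your overall architecture is exactly the paper's: combine $u_1(\underline{x},\hat{y}(\underline{x}))=k$ from Lemma~\ref{lemma:monotonicity} with ``no distortion at the top,'' i.e.\ $\hat{y}(\underline{x})=y^{\ast}(\underline{x})$, and conclude via uniqueness of the first-best pair. The paper simply \emph{asserts} no distortion at the top as a standard screening fact; you go further and try to derive it by a local perturbation of seller~2's menu. That extra effort is commendable, but the justification you give for the over-provision case is not right.

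Concretely, suppose (submodular case) $\hat{y}(\underline{x})>y^{\ast}(\underline{x})$ and you replace the top bundle by $(y^{\ast}(\underline{x}),q^{\dagger})$ with $q^{\dagger}$ chosen so that $U(\underline{x})$ is unchanged. For any $x>\underline{x}$, the mimicking payoff from the perturbed bundle minus the original one equals
\[
\bigl[u(x,y^{\ast}(\underline{x}))-u(\underline{x},y^{\ast}(\underline{x}))\bigr]-\bigl[u(x,\hat{y}(\underline{x}))-u(\underline{x},\hat{y}(\underline{x}))\bigr],
\]
which is \emph{strictly positive} by submodularity since $y^{\ast}(\underline{x})<\hat{y}(\underline{x})$. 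So lowering the top type's quantity makes the bundle \emph{more} attractive to the other types, not less; your last paragraph has the sign backwards, and the downward IC constraint can tighten under this perturbation.

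The fix is not to rescue the perturbation in that direction but to rule out over-provision separately: as noted in the proof of Lemma~\ref{lemma:deviations}, if $u_2(x,\hat{y}(x))<k$ for any served $x$, seller~2 can replace $\hat{y}(x)$ by $y^{\ast}(x)$ (adjusting the price to keep that type's utility) and strictly raise profit from that type without harming IC for the others, so an optimal menu has $u_2(x,\hat{y}(x))\ge k$ throughout. With over-provision excluded, only the under-provision branch of your perturbation is needed, and there your IC check is correct.
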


\begin{proof}
Recall that $y^{\ast}(x)$ denotes the first best second period quantity
conditional on any level of the first period consumption $x$. Suppose that $u$ is submodular. On one hand, since there is no distortion at the top in the second period screening problem, seller 2 must offer $y^{\ast}(\underline{x})$ to the consumer who consumed $\underline{x}$ in the first period. On the other hand, Lemma~\ref{lemma:monotonicity} establishes that $\underline{x}$ must satisfy
\begin{equation*}
u_{1}(\underline{x},\hat{y}(\underline{x}))=k.
\end{equation*}These two conditions imply\begin{equation*}
u_{1}(\underline{x},y^{\ast}(\underline{x}))=k,
\end{equation*}which means that $(\underline{x},y^{\ast}(\underline{x}))$ satisfies the
conditions for the first best allocation. The first best allocation is
unique, therefore $\underline{x}=x^{\ast}.$ When $u$ is supermodular,
the "top" corresponds to $\overline{x},$ and the rest of the argument is
the same.
\end{proof}

To summarize, the characterization in the above lemmata imply that $\overline{x}=x^{o}$ and $\underline{x}=x^{\ast}$ in the case of submodular utility. When $u$ is
supermodular, $\overline{x}=x^{\ast}$ and $\underline{x}=x^{o}$. 

To complete the description of the equilibrium, it remains to specify  the distribution of first period consumption $F$ that induces the second period consumption $\hat{y}(x)$ and consumer indirect utility $U(.)$ on the interval $[\underline{x},\hat{x}]$, and this is set out in the following theorem:

\begin{theorem}
\label{t:eq} There exists an equilibrium in which

\begin{enumerate}
\item Seller 1 offers a two-part tariff. The entry fee equals to the seller
1's value added in the socially efficient consumption stream:
\begin{equation*}
u(x^{\ast},y^{\ast})-kx^{\ast}-u(0,y^{\ast}).
\end{equation*}
The per-unit price equals to the marginal cost $k$.

\item Seller 2 offers a menu that includes every bundle in $[0,y^{\ast}]$.
The bundles in this menu are indexed by the first period consumption $x$.
The price of a bundle $\hat{y}(x)$ is
\begin{equation*}
q(x)=u(x,\hat{y}(x))-kx-[u(\overline{x},0)-k\overline{x}]
\end{equation*}

\item In the first period, the consumer randomly chooses the bundle
according to a distribution $F$. In the second period, she chooses a
consumption $\hat{y}(x)$ where $x$ is her first period consumption.

If $u$ is submodular, the support of the distribution $F$ is $[x^{\ast},x^o]$
and
\begin{equation*}
F(x)=\exp \left[ \int\limits_{x}^{\overline{x}}\frac{u_{21}(z,\hat{y}(z))}{u_{2}(z,\hat{y}(z))-k}dz\right] .
\end{equation*}

If $u$ is supermodular, the support of the distribution $F$ is $[x^o,x^{\ast}]$ and
\begin{equation*}
F(x)=1-\exp \left[ \int\limits_{\underline{x}}^{x}\frac{u_{21}(z,\hat{y}(z))}{k-u_{2}(z,\hat{y}(z))}dz\right] .
\end{equation*}
\end{enumerate}
\end{theorem}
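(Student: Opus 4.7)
The plan is to verify the proposed profile is a sequential equilibrium by checking each party's incentive in turn; I focus on the submodular case since the supermodular case is symmetric.

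First, substituting the proposed $q(x)$ into $U(x) := u(x,\hat y(x)) - q(x)$ yields $U(x) = u(\bar x,0) - k\bar x + kx$ on $[\underline x, \bar x]$. Combined with seller 1's per-unit price $k$, the consumer's total payoff is constant on this interval, so any $F$ supported there is a best response among interior choices. Seller 2's IC reduces to $\int_{x'}^{x}[k - u_1(s,\hat y(x'))]\,ds \geq 0$, which is non-negative by concavity of $u$ in $x$ combined with $u_1(x',\hat y(x')) = k$. The IR condition reduces to $\int_x^{\bar x}[u_1(s,0) - k]\,ds \geq 0$, which holds because $\bar x = x^o$ maximizes $u(\cdot,0) - k\,\cdot$. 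Using the envelope condition at the self-selection point and submodularity, I also verify that no $x$ outside $[\underline x, \bar x]$ is profitable for the consumer given the menu.

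The crux is to show seller 2's menu is optimal given $F$. Using the envelope identity $U'(x) = u_1(x,\hat y(x))$ (Lemma~\ref{lemma:monotonicity}) together with the binding IR $U(\bar x) = u(\bar x,0)$, and an interchange in the order of integration, I rewrite the expected profit as
\[
\Pi_2 \;=\; \int_{\underline x}^{\bar x}\Bigl[\bigl(u(x,\hat y(x)) - k\hat y(x)\bigr)f(x) \;+\; F(x)\,u_{12}(x,\hat y(x))\Bigr]\,dx \;-\; u(\bar x,0).
\]
Pointwise maximization over $\hat y(x)$ gives the FOC
\[
[u_2(x,\hat y(x)) - k]\,f(x) \;+\; F(x)\,u_{12}(x,\hat y(x)) \;=\; 0,
\]
which, combined with the consumer-indifference condition $u_1(x,\hat y(x)) = k$ derived above, is a separable ODE $(\ln F)'(x) = -u_{12}/(u_2 - k)$ whose solution with $F(\bar x) = 1$ is precisely the formula stated. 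Along the way I verify $u_2(x,\hat y(x)) - k > 0$ on $(\underline x,\bar x)$: the allocation $\hat y(x)$ equals the efficient $y^\ast(x)$ only at $\underline x = x^\ast$ and is distorted downward elsewhere, so the integrand in the exponent is well-defined and $F$ is a strictly increasing $C^1$ distribution.

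Finally, since seller 2's menu is fixed by non-observability, seller 1 extracts the wedge between the consumer's gross payoff after buying from seller 1 (constant at $u(\bar x,0) - k\bar x$ on the indifference set, net of the marginal cost already paid) and her outside-option payoff from refusing seller 1 altogether, which by submodularity is attained on the menu at $x' = \underline x = x^\ast$ and equals $u(0,y^\ast) - q(x^\ast)$. Computing this gap yields $A = u(x^\ast,y^\ast) - kx^\ast - u(0,y^\ast)$, while the two-part structure with marginal price $k$ is optimal because any other per-unit price would break the consumer's indifference, collapsing her choice to a single $x$ and leaving less total surplus available for extraction. The main obstacle is rigorously validating the first-order approach for seller 2's screening problem --- implementability through monotonicity of $\hat y$ and the global second-order conditions --- and ruling out more elaborate seller 1 deviations via a careful outside-option calculation that accounts for every first-period quantity the consumer might select, including those outside the support of $F$.
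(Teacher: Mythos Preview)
Your approach is essentially identical to the paper's: derive the virtual-surplus form of seller~2's profit via the envelope identity and an interchange of integration, take the pointwise first-order condition, solve the resulting separable ODE for $F$ with boundary $F(\bar x)=1$, and then verify that $F$ is a valid c.d.f.\ and that the menu is incentive compatible. One slip: in your displayed profit expression the term multiplying $F(x)$ should be $u_{1}(x,\hat y(x))$, not $u_{12}(x,\hat y(x))$---the first-order condition you write immediately afterward is correct precisely because differentiating $u_1$ with respect to $\hat y$ gives $u_{12}$.
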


Verification that the above distribution indeed accomplishes the task is set out in Appendix \ref{app:main}, which completes the proof of the theorem.

\subsection{Uniqueness of equilibrium outcomes}

All equilibria in this model have several common features. If the utility is supermodular, any equilibrium must have the same \emph{outcome} as the equilibrium described in Theorem~\ref{t:eq}---that is the distribution of consumptions and the payoffs of the sellers and the consumer must be the same.\footnote{It is possible to generate this equilibrium outcome in several different ways. For example, seller 1 can perform the randomization and present the  consumer with a single  $x$, where $x$ has distribution $F$.} If the utility is submodular, there is a continuum of equilibria outcomes;  however these outcomes only differ  because they distribute 
differently the surplus between seller 1 and the consumer---seller 2's payoff and the distribution of consumptions are invariant.
Since the allocation is invariant across these equilibria, we say that the equilibrium outcome is \emph{essentially unique}.

The multiplicity of payoff division in the submodular case arises because seller 2 can add items that are larger than $y^{\ast}$ to his menu and price any quantity above $y^{\ast}$ at  marginal cost.  
Every such item $(q,y)$ satisfies $y = y^{\ast}+\delta$ for some $\delta>0$ and $q = q(y^{\ast})+k\delta$.
Let $\tilde{y}$ denote the largest such item such that $u_{2}(0,\tilde{y})\geq k$. 
These items on the menu will not be chosen on equilibrium path. However, if the consumer decides not to consume in the first period, then $\tilde{y}$   becomes the most attractive on the  menu, and gives the consumer 
a strictly higher payoff than  $(q(y^{\ast}),y^{\ast})$ because $u_{2}(0,y^{\ast})>k$.
 Therefore,  the value of the first-period outside option is  increasing in $\tilde{y}$, but these items do not affect firm 2's payoff.
 By offering large $\tilde{y}$,  seller 2 reallocates the first-period surplus from seller 1 to the consumer, but does not modify the distribution of consumptions or his own payoff.

To formalize these ideas we characterize the objects that are invariant across all equilibria. We focus on the submodular case, since equilibrium outcomes are unique  when $u$ is supermodular.\footnote{Formally,  the results below apply  to both cases.} Fix an equilibrium $\sigma $ of the game. Let $V_{\sigma}$ denote the consumer's ex ante utility and let $\pi _{1,{\sigma}}$ denote
the expected profit of seller 1  in this equilibrium. Let $\Sigma_{\sigma}:=V_{\sigma}+\pi_{1,{\sigma}}$ denote the sum of payoffs of the consumer
and seller 1. Also, let $X_{\sigma}$ denote the set of the first-period consumptions chosen in equilibrium $\sigma$, and $U_{\sigma}(x)$ denote the information rent of the consumer after choosing $x$ in the first period of equilibrium $\sigma$. Let $\tilde{\sigma}$ denote the specific equilibrium constructed
in the previous section, the support of which is the largest possible set, $X_{\tilde{\sigma}} = [x^{\ast},x^{o}].$ Note, that by Lemmas~\ref{lemma:xo} and \ref{lemma:x*}, $X_\sigma\subset X_{\tilde{\sigma}}$.

\begin{lemma}\label{lemma:samepayoff}
For any equilibrium $\sigma:\Sigma_{\sigma} = \Sigma_{\tilde{\sigma}}$ and for  any $x\in X_{\sigma}: U_{\sigma}(x)=U_{\tilde{\sigma}}(x).$
\end{lemma}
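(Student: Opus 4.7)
My plan is to show that $\Sigma_{\sigma}(x):=U_{\sigma}(x)-kx$ is constant on $X_{\sigma}$ with value $u(x^{o},0)-kx^{o}$; both conclusions of the lemma then follow at once. I focus on the submodular case (supermodular is symmetric). The starting points are, first, that by Lemma~\ref{lemma:xo} we have $\overline{x}=x^{o}$ and the binding second-period IR at the top gives $U_{\sigma}(x^{o})=u(x^{o},0)$, so $\Sigma_{\sigma}(x^{o})=u(x^{o},0)-kx^{o}$; and, second, that since the consumer randomizes on $X_{\sigma}$ her indifference forces $p_{\sigma}(x)=U_{\sigma}(x)-V_{\sigma}$ on this set, whence
\[
\pi_{1,\sigma}=\mathbb{E}_{F}\bigl[\Sigma_{\sigma}(x)\bigr]-V_{\sigma},\qquad \Sigma_{\sigma}=V_{\sigma}+\pi_{1,\sigma}=\mathbb{E}_{F}\bigl[\Sigma_{\sigma}(x)\bigr].
\]

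The central step is a deviation argument for seller 1. The sequential-equilibrium requirement that deviations by seller 1 do not alter the buyer's beliefs about seller 2's pricing scheme means that $U_{\sigma}(\cdot)$, and in particular the first-period outside option $U_{\sigma}(0)=\sup_{(y,q)\in\text{menu}}[u(0,y)-q]$, are invariant to such deviations. Hence, for every $x'$, seller 1 can offer the singleton $\{x'\}$ at price $U_{\sigma}(x')-U_{\sigma}(0)-\epsilon$; the consumer accepts and seller 1's profit approaches $\Sigma_{\sigma}(x')-U_{\sigma}(0)$. Optimality of seller 1's equilibrium strategy thus demands
\[
\mathbb{E}_{F}[\Sigma_{\sigma}]-V_{\sigma}\;\geq\;\sup_{x'}\Sigma_{\sigma}(x')-U_{\sigma}(0).
\]
Combined with $V_{\sigma}\geq U_{\sigma}(0)$ (consumer participation) and the trivial $\sup_{x'}\Sigma_{\sigma}(x')\geq \mathbb{E}_{F}[\Sigma_{\sigma}]$, this chain collapses into equalities: $V_{\sigma}=U_{\sigma}(0)$ and $\mathbb{E}_{F}[\Sigma_{\sigma}]=\sup_{x'}\Sigma_{\sigma}(x')$. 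The latter says that $\Sigma_{\sigma}$ attains its supremum $F$-almost surely on $X_{\sigma}$.

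To upgrade this to pointwise constancy, I appeal to Lemma~\ref{lemma:differentiable}: $\Sigma_{\sigma}$ is continuous at every point of the closed support $X_{\sigma}$, so "$F$-a.s.\ equal to $\sup\Sigma_{\sigma}$ on $X_{\sigma}$" becomes "identically equal to $\sup\Sigma_{\sigma}$ on $X_{\sigma}$". Since $x^{o}\in X_{\sigma}$, the common value is exactly $\Sigma_{\sigma}(x^{o})=u(x^{o},0)-kx^{o}$. Therefore $U_{\sigma}(x)=kx+u(x^{o},0)-kx^{o}$ for every $x\in X_{\sigma}$; the same formula also describes $U_{\tilde{\sigma}}$ on $[x^{*},x^{o}]\supseteq X_{\sigma}$, delivering $U_{\sigma}(x)=U_{\tilde{\sigma}}(x)$ on $X_{\sigma}$, and $\Sigma_{\sigma}=\mathbb{E}_{F}[\Sigma_{\sigma}]=u(x^{o},0)-kx^{o}=\Sigma_{\tilde{\sigma}}$. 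The main obstacle is the deviation step itself: it hinges on the passive-beliefs structure to keep $U_{\sigma}(\cdot)$ and $U_{\sigma}(0)$ fixed as seller 1 deviates, and it makes transparent the origin of the surplus-sharing multiplicity in the submodular case, namely that seller 2 can shift the split $(V_{\sigma},\pi_{1,\sigma})$ by inflating $U_{\sigma}(0)$ through off-path large items while leaving the invariant $\Sigma_{\sigma}$ untouched.
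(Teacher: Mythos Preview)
Your proof is correct and follows the same skeleton as the paper: evaluate $\Sigma_\sigma$ at $x^{o}$ using Lemma~\ref{lemma:xo} and the binding IR constraint, then use constancy of $x\mapsto U_\sigma(x)-kx$ on $X_\sigma$ to transport this value to every chosen $x$. The difference is one of economy. The paper's three-line proof simply \emph{invokes} the constancy of $\Sigma_\sigma(\cdot)$ on $X_\sigma$, because this was already established earlier---the proof of Lemma~\ref{lemma:differentiable} asserts that every chosen $x\in X$ must maximize $\Sigma(\cdot)$, and Lemma~\ref{lemma:monotonicity} then records the first-order condition $u_1(x,\hat y(x))=k$ that this entails. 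You instead re-derive constancy from scratch via the explicit seller-1 singleton deviation and the squeeze $\mathbb{E}_F[\Sigma_\sigma]\le\sup\Sigma_\sigma\le\mathbb{E}_F[\Sigma_\sigma]+(V_\sigma-U_\sigma(0))\le\mathbb{E}_F[\Sigma_\sigma]$, then upgrade the a.s.\ statement using continuity. This is more self-contained and has the side benefit of making explicit the identity $V_\sigma=U_\sigma(0)$, which the paper only exploits later (in Theorem~\ref{t:payoffs}) to explain the submodular payoff multiplicity; but for the lemma itself your deviation machinery is redundant given what the paper has already proved.
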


\begin{proof}
By Lemma~\ref{lemma:xo} the maximal quantity offered and chosen in period one
equals $x^{o}$ in any equilibrium, and the informational rent that accrues
to the consumer is zero in this case. Since $x^{o}$ is in the support of
every equilibrium, for any equilibrium $\sigma$
\begin{equation*}
\Sigma_{\sigma}=u(x^{o},0)-kx^{o}=\Sigma_{\tilde{\sigma}}.
\end{equation*}For all $x^{\ast}\in X_{\sigma}$ the following holds in equilibrium
\begin{equation*}
\Sigma_{\tilde{\sigma}}=U_{\sigma}(x)-kx,
\end{equation*}therefore $U_{\sigma}(x)= U_{\tilde{\sigma}}(x)$.
\end{proof}

In the light of this proposition, we write $\Sigma $ and $U(x)$ for
the payoffs that arise in \emph{any} equilibrium. Let $F$ denote the
c.d.f. associated with $\tilde{\sigma},$ as defined in the previous section,
and let $f$ denote the associated density.

\begin{theorem}\label{t:uniqueness}
\begin{enumerate}
	\item If $u$ is supermodular, then the equilibrium outcome is unique.
	\item If $u$ is submodular, the equilibrium outcome is essentially unique. In every equilibrium,
	\begin{enumerate}
		\item the distribution of the first-period consumption is $F$ with the support on $[x^{\ast},x^{o}]$;
		\item the items of the second-period menu that are chosen on equilibrium path are $\{(q(x),\hat{y}(x))\}_{x\in [x^{\ast},x^{o}]}$; and
		\item the sum of the equilibrium payoffs of seller 1 and the consumer is $\Sigma$.
	\end{enumerate}
\end{enumerate}
\end{theorem}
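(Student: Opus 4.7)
The plan is to combine the previous lemmas with seller~2's optimality to pin down every equilibrium object. Fix an arbitrary equilibrium $\sigma$ and work in the submodular case; the supermodular case is symmetric. Lemmas~\ref{lemma:xo} and~\ref{lemma:x*} give $\min X_\sigma=x^*$ and $\max X_\sigma=x^o$, so $X_\sigma\subseteq[x^*,x^o]$. Lemma~\ref{lemma:monotonicity} pins $\hat y_\sigma(x)$ uniquely on $X_\sigma$ via $u_1(x,\hat y_\sigma(x))=k$, while Lemma~\ref{lemma:samepayoff} fixes $U_\sigma(x)=\Sigma+kx$, whence $q_\sigma(x)=u(x,\hat y_\sigma(x))-\Sigma-kx$. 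So the second-period items chosen on path agree with Theorem~\ref{t:eq} on $X_\sigma$, which establishes part 2(b) provided the support is full.

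Next I would derive seller~2's pointwise first-order condition to nail down $F_\sigma$. Writing his profit as $\int[u(x,\hat y(x))-k\hat y(x)-U(x)]f_\sigma(x)\,dx$, eliminating $U$ via the envelope relation $U'=u_1(\cdot,\hat y(\cdot))$ and integration by parts, and substituting $u_1(x,\hat y(x))=k$, pointwise maximization in $\hat y(x)$ reduces to
\[
u_2(x,\hat y_\sigma(x))-k \;=\; -\,u_{21}(x,\hat y_\sigma(x))\,\frac{F_\sigma(x)}{f_\sigma(x)}.
\]
Together with $F_\sigma(x^o)=1$ and the logarithmic singularity of the hazard at $x^*$ (where $u_2-k$ vanishes, forcing $F_\sigma(x^*)=0$), this ODE has a unique solution coinciding with the $F$ of Theorem~\ref{t:eq}. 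To rule out candidate gaps $(a,b)\subset[x^*,x^o]$ in the support, extend $U_\sigma$ off-path by $U_\sigma(z)=\sup_{x\in X_\sigma}\{u(z,\hat y_\sigma(x))-q_\sigma(x)\}$; the endogenous-type FOC together with $u_{11}<0$ yields $U_\sigma(z)-kz<\Sigma$ in the interior of any such gap. Sequential rationality then forces seller~2 to also offer the items $(\hat y_\sigma(z),q_\sigma(z))$ for $z\in(a,b)$—a direct check using submodularity shows that this augmentation preserves IC at no change in his profit—which lifts the consumer's indirect utility to $\Sigma+kz$ throughout $(a,b)$ and forces full support $[x^*,x^o]$.

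For the supermodular case's unique split, $\Sigma_\sigma=\Sigma$ is already pinned down by Lemma~\ref{lemma:samepayoff}, so the task reduces to the consumer's first-period outside option $V^{\text{out}}_\sigma=\sup_y\{u(0,y)-Q_\sigma(y)\}$. Differentiating $u(0,\hat y_\sigma(x'))-q_\sigma(x')$ in $x'$ and using $u_1(x',\hat y_\sigma(x'))=k$ shows it is maximized at the endpoint $x'=\underline x=x^o$, where $\hat y=0$, giving $V^{\text{out}}=u(0,0)$. Crucially, any item with $y>0$ seller~2 might add at or above marginal cost is unappealing to a $x=0$ consumer because under supermodularity $u_2(0,y)<u_2(x',y)$ for $x'>0$, so $u_2(0,y)<k$ near such extensions; hence the outside option is robust to menu augmentation, and $V_\sigma=u(0,0)$, $\pi_{1,\sigma}=\Sigma-u(0,0)$ are uniquely determined. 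In the submodular case the analogous augmentations with $y>y^*$ strictly raise the outside option (since $u_2(0,y)>k$ there), generating exactly the continuum of admissible payoff splits noted above—while $F_\sigma$ and the seller-2 payoff remain invariant.

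The main obstacle is the gap-elimination step: because augmenting seller~2's menu with items in a gap leaves his on-path profit unchanged, no straight profitability deviation closes the argument; one must instead invoke the belief-consistency of sequential equilibrium together with the strict inequality $U_\sigma(z)-kz<\Sigma$ to force seller~2 to "fill in" every potential gap, after which the ODE on $[x^*,x^o]$ delivers the unique $F$.
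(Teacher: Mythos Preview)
Your overall structure—using Lemmas~\ref{lemma:xo}, \ref{lemma:x*}, \ref{lemma:monotonicity}, and \ref{lemma:samepayoff} to pin down the on-path second-period allocation and rents, then invoking seller~2's first-order condition to identify the distribution—tracks the paper. The gap-elimination step, however, does not go through, and you correctly flag it as the ``main obstacle'' without actually resolving it. You write that ``sequential rationality then forces seller~2 to also offer the items $(\hat y_\sigma(z), q_\sigma(z))$ for $z\in(a,b)$,'' but this is false: if those types have measure zero under $F_\sigma$, augmenting the menu leaves seller~2's profit unchanged, so he is already best-responding without them. Sequential equilibrium does not select among profit-equivalent menus, and ``belief-consistency'' constrains what seller~2 believes about the consumer's type, not which indifferent action he must take. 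So nothing in your argument contradicts the existence of an equilibrium with a gap.

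The paper closes the gap by a different route that you do not use: if the support has a hole $(x_1,x_2)$, optimality of seller~2's screening menu makes the downward IC constraint bind across the hole, so the consumer at $x_1$ is \emph{indifferent} between her own bundle and the (distinct) bundle designed for type $x_2$. That indifference triggers Lemma~\ref{lemma:deviations}, which produces a strictly profitable deviation for one of the sellers—a contradiction. Two smaller issues: (i) you never rule out atoms; the paper does so by invoking \cite{Hellwig10} (an interior atom forces a jump in the allocation, contradicting the continuity of $\hat y$ delivered by Lemma~\ref{lemma:monotonicity} once the support is connected), and without this your ODE derivation presupposes a density; (ii) your claim that the hazard singularity forces $F_\sigma(x^*)=0$ is not right—the paper explicitly allows an atom at $x^*$, and the ODE is pinned down by the boundary condition $F(x^o)=1$, not by a condition at $x^*$.
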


\begin{proof}
The proof is in appendix \ref{app:uniqueness}.
\end{proof}

The only difference between the cases of sub- and supermodular utilities is the effect of second-period items that are not chosen by the consumer on equilibrium path. In the case of supermodular utility, these items make no difference. However, in the case of submodular utility, if the seller in the second period offers items $y>y^{\ast}$ in the menu (but makes them unattractive to the consumer who consumed $x\in[x^{\ast},x^{o}]$), he may make the outside option in the first period more valuable to the consumer. Such additions to the menu may affect the prices in the first period without affecting the equilibrium consumption.

\subsection{Privacy: Effects on  welfare, consumer surplus and profits}

\label{sec:payoffs}

Our analysis shows that  consumers' privacy dramatically affects equilibrium outcomes. 
The equilibrium outcomes under privacy are very different from the equilibrium outcome in the benchmark, in which past consumption is  observed by seller 2. 
This has efficiency and payoff consequences. We find that under privacy: 
\begin{enumerate}[(i)]
\item Social welfare is greater;
\item  Consumer's utility is greater and seller 2's profit is larger;

\item Seller 1's profit is lower.
\item the sum of seller 1's profit and consumer utility is invariant. 
\end{enumerate}

The most interesting of these results is that total welfare increases under consumer privacy. 
To understand this, fix attention on the submodular case. 
When transactions are public, there is excessive consumption in the first period, at $x^o$, in order to improve the second period outside option of the consumer, while second period consumption is conditionally efficient. 
Under private transactions, first period consumption becomes more efficient---indeed, it lies in the interval $[x^{\ast},x^o]$. 
Although second period consumption is no longer constrained efficient, the improvement in first period efficiency more than offsets this. 

Seller 2 is also better off under consumer privacy, since his ignorance protects him from being exploited as a Stackelberg follower. 
This more than offsets his consequent inability to fully extract consumer surplus.

The consumer is better off since privacy increases her outside option in the first period. 
Since her choices are not  observed by seller 2, the consumer can choose the outside option in period 1 without causing seller 2 to increase the prices in period 2. This increases the value of the first-period outside option as  compared to the benchmark with observable consumption.

Finally, the result that  the sum of seller 1's profit and consumer utility is the same as under observable past consumption may appear surprising. 
However, the intuition is straightforward  ---  this sum under privacy is the same regardless of the consumer's  first period choices, and so 
it equals the sum of payoffs when the consumer chooses $x^o$ in period one and her outside option in period two.\footnote{When  the past is observable, the consumer's overall payoff is the same whether she buys or does not buy in the second period.}

Let $\pi_i, i\in\{1,2\}$ denote the profits of the two firms under privacy, and let $V$ denote the consumer's payoff,  let $W$ denote social welfare.  
Let $\Sigma:=\pi_1+V$ denote the sum of payoffs of the consumer and the firm.  
We append superscript $o$ to each of these variables to denote their equilibrium values under the observable transactions benchmark---e.g. the consumer's payoff is $V^o$. 
The subscript $\sigma$---associates a variable with equilibrium  $\sigma$ under privacy.
We denote by $\tilde{\sigma}$ the specific equilibrium 
 characterized in Theorem~\ref{t:eq}, where the  second-period menu was minimal. 
\begin{theorem}\label{t:payoffs}
If $u$ is supermodular, then
\begin{enumerate}[(i)]
\item $\Sigma = \Sigma^{o}$;
\item $\pi_1<\pi_1^{o}$ and $V>V^{o}$; and
\item $W-W^o =\pi_2-\pi_2^o>0$.
\end{enumerate}

If $u$ is submodular, then for any equilibrium $\sigma$
\begin{enumerate}[(i)]
\item $\Sigma_\sigma = \Sigma^{o} $ $\forall \sigma$;
\item $\pi_{1,\sigma}\leq\pi_{1,\tilde{\sigma}}<\pi_1^{o}$ and $V_{\sigma}\geq V_{\tilde{\sigma}}>V^{o}$; and
\item $W_{\tilde{\sigma}}-W^o=W_{\sigma}-W^o =\pi_{2,\sigma}-\pi_2^o>0$.
\end{enumerate}

\end{theorem}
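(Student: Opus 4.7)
The plan is to prove the four numerical relations in sequence, using the equilibrium characterization from Lemma~\ref{lemma:samepayoff} and Theorem~\ref{t:eq}, and reducing the welfare comparison in (iii) to a singleton-menu benchmark for seller 2.

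For (i), I would combine Lemma~\ref{lemma:samepayoff} (giving $\Sigma_\sigma = u(x^o,0) - k x^o$ under privacy) with a direct computation under observability: seller 1 extracts $u(x^o,0) - u(0,0) - k x^o$ while the consumer retains $u(0,0)$, so $\Sigma^o = u(x^o,0) - k x^o = \Sigma_\sigma$. The same identity holds in the supermodular case, with Lemmas~\ref{lemma:xo} and~\ref{lemma:x*} identifying the endpoint at which individual rationality binds.

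For (ii), claim (i) reduces both inequalities to the single statement $V_\sigma > V^o = u(0,0)$. Under the two-part tariff of Theorem~\ref{t:eq}, $\pi_{1,\sigma} = \Sigma - V_{\text{OO},\sigma}$, where $V_{\text{OO},\sigma}$ is the consumer's payoff from declining to buy from seller 1 and picking her best item from seller 2's equilibrium menu. In the submodular case I would bound $V_{\text{OO},\sigma}$ below by the value of the deviation to $(y^*, q(x^*))$:
\[
V_{\text{OO},\sigma} \;\ge\; u(0,y^*) - q(x^*) \;=\; u(0,y^*) - u(x^*,y^*) + k x^* + \Sigma.
\]
Strict submodularity yields $u(0,y^*) - u(x^*,y^*) > u(0,0) - u(x^*,0)$, and since $x^o$ strictly maximizes $u(x,0) - kx$, $u(x^*,0) - k x^* \le u(x^o,0) - k x^o = \Sigma$. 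Combining these gives $V_{\text{OO},\sigma} > u(0,0) = V^o$. The supermodular case follows by the symmetric deviation at the opposite endpoint of the equilibrium support, and the ordering $V_\sigma \ge V_{\tilde\sigma}$ in the submodular case holds because enlarging seller 2's menu can only weakly raise $V_{\text{OO},\sigma}$.

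For (iii), the identity $W = V + \pi_1 + \pi_2 = \Sigma + \pi_2$ together with (i) gives $W - W^o = \pi_{2,\sigma} - \pi_2^o$. Writing $S(x,y) := u(x,y) - kx - ky$, Theorem~\ref{t:eq} gives $\pi_{2,\sigma} = \mathbb{E}_F[S(x,\hat y(x))] - \Sigma$ and $\pi_2^o = S(x^o,y^o) - \Sigma$. I would first establish $\pi_{2,\sigma} \ge \pi_2^o$ by observing that the singleton menu $\{(y^o, q^o)\}$ used in the observable benchmark is always feasible for seller 2 under privacy; a short modularity calculation shows every type $x$ in the privacy support weakly accepts it (in the submodular case because $u(x,y^o) - u(x,0) = \int_0^{y^o} u_2(x,y)\,dy \ge \int_0^{y^o} u_2(x^o,y)\,dy = q^o$ for $x \le x^o$, and symmetrically in the supermodular case). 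Optimality of the equilibrium menu then forces $\pi_{2,\sigma} \ge \pi_2^o$. For strictness I would perturb the singleton by adding a second item $(y^o + \varepsilon,\, q^o + \delta)$ with $\delta$ in the open interval $(k\varepsilon,\, u_2(x^*,y^o)\varepsilon)$; this interval is nonempty because modularity together with A3 gives $u_2(x^*,y^o) > k$, the perturbation attracts only types near $x^*$, yields strictly positive marginal profit $\delta - k\varepsilon > 0$, and because $F$ puts positive mass on every neighborhood of $x^*$, expected profit strictly improves. The main obstacle is this final perturbation argument: one must verify the IC-feasible perturbation strictly raises expected profit against the equilibrium $F$, which requires choosing $(\varepsilon,\delta)$ carefully enough that the loss of profit from types newly selecting the extra item is dominated by the gain in per-capita margin.
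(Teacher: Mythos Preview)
Your treatment of (i), submodular (ii), and the identity $W-W^o=\pi_2-\pi_2^o$ in (iii) follows the paper's proof essentially line by line.

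There is a genuine gap in your treatment of supermodular (ii). The ``symmetric deviation at the opposite endpoint'' does not deliver a strict inequality. In the supermodular case a consumer who declines seller~1 has history $x=0$, which lies \emph{below} the entire support $[x^o,x^*]$ and is therefore a lower-willingness-to-pay type than anyone seller~2 serves. Her best item from the equilibrium menu is the one designed for $\underline{x}=x^o$, namely $\hat y(x^o)=0$ at price $q(x^o)=0$, yielding exactly $u(0,0)=V^o$. The deviation to $(y^*,q(x^*))$ gives $u(0,y^*)-u(x^*,y^*)+\Sigma+kx^*$, and under supermodularity the comparison with $u(0,0)$ is ambiguous: supermodularity forces $u(0,y^*)-u(x^*,y^*)<u(0,0)-u(x^*,0)$, which works against you rather than for you. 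In fact the paper's own proof computes $V=u(0,0)=V^o$ here and concludes that ``unobservability has no distributional effect on the first period payoffs when $u$ is supermodular,'' so the strict inequality asserted in the theorem statement is not established in the paper either; do not expect your argument to produce it.

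For strictness in (iii), your perturbation is correct but your stated worry is unfounded: any type that switches to $(y^o+\varepsilon,\,q^o+\delta)$ yields per-capita profit $\delta-k\varepsilon>0$ \emph{above} the singleton profit, and non-switching types are unaffected, so there is no offsetting loss term to control. The paper takes a shorter route: the equilibrium menu (with $\hat y$ ranging over all of $[0,y^*]$) differs from the singleton $\{(q^o,y^o)\}$; since seller~2 is optimizing against $F$ and the singleton is feasible and accepted by every type in the support, the equilibrium menu must be strictly better, i.e.\ $\pi_2>\pi_2^o$.
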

\begin{proof}
Note that the first-period surplus, $\Sigma$, can be evaluated at any point in the support. In particular, at $x^{o},$ the consumer takes her outside option in the second period, and so

\begin{equation*}
\Sigma =u(x^{o},0)-kx^{o}.
\end{equation*}

This is \emph{identical} with the total payoff of the seller 1 and the consumer when
consumption is observable---although the consumer purchases $y^{o}>0,$ seller
2 appropriates the the difference $u(x^{o},y^{o})-u(x^{o},0)$, and hence the consumer's continuation payoff is $u(x^{o},0).$ We turn to the distribution of the total payoff between the two
parties in the two cases. In the observable case, the consumer's payoff
equals
\begin{equation*}
V^{o}:=u(0,0).
\end{equation*}

In the unobservable case, the results now differ depending on whether $u$ is
supermodular or submodular. So we consider these in turn.

When $u$ is supermodular, the consumer who chooses the outside option in the first period, chooses the item $\hat{y}(x^{o})=0$ in the second period, and therefore gets a total payoff
\begin{equation*}
V = u(0,0).
\end{equation*}
This is exactly equal to $V^{o},$ and hence unobservability has no
distributional effect on the first period payoffs when $u$ is supermodular.

When $u$ is submodular, there is a continuum of equilibria that differ by the value of the outside option in the first period. Consider the equilibrium with the smallest such value---i.e., the equilibrium $\tilde{\sigma}$  characterized in Theorem~\ref{t:eq}. In this equilibrium, if the consumer chooses the outside option in the first period, she
buys $\hat{y}(x^{\ast}) = y^{\ast}$ in the second period and, therefore, gets a total payoff
\begin{equation*}
\underline{V} := V_{\tilde{\sigma}} = \left[ u(0,y^{\ast})-u(x^{\ast},y^{\ast})\right] +u(x^{\ast},0).
\end{equation*}

The difference in payoffs is
\begin{equation*}
\underline{V}-V^{o}=[u(x^{\ast},0)-u(0,0)]-\left[ u(x^{\ast},y^{\ast})-u(0,y^{\ast})\right] =\pi _{1}^{o}-\overline{\pi}_{1}>0,
\end{equation*}
where $\pi_{1}^{o}$ denotes seller 1's profits in the observable case. The
second equality in the above follows since the total payoff $\Sigma$ is
equal in the two cases. The strict inequality arises since $u$ is strictly
submodular.

Now consider the equilibrium with the largest value of the first-period outside option. Using the same argument we obtain that
\begin{equation*}
\overline{V} := \left[u(0,y^{\ast}(0))-u(x^{\ast},y^{\ast})\right]-k(y^{\ast}(0)-y^{\ast}) +u(x^{\ast},0)>\underline{V}.
\end{equation*}
and $\overline{\pi}_1>\underline{\pi}_1$. For every $V\in[\underline{V}, \overline{V}]$, there exists an equilibrium in which the consumer's payoff is $V$ and seller 1's profit is $\Sigma-V$.

We conclude that, in any equilibrium, the consumer is strictly better off when
consumption is unobservable, and seller 1 is strictly worse off to exactly the
same extent.

Since $\Sigma$ is the same across all equilibria including the benchmark case of the observable past, the gain (loss) of seller 2 from unobservability of past consumption equals to the increase (decrease) of the social welfare. Moreover, this gain (loss) is the same for all equilibria because the equilibrium distribution of consumption is the same.

To see that $\pi_2>\pi_2^o$, note that if seller 2 offers a single item $(q^o,y^o)$, it will be accepted with probability 1 because every consumer's type is better (in terms of marginal willingness to pay) than $x^o$. The fact that such a menu is not offered implies that $\pi_2>\pi_2^o$.
\end{proof}

The equilibrium consumption is distorted by the intertemporal competition between the sellers in an unusual way. If the utility is submodular, the consumer always over-consumes in the first period and under-consumes in the second. The realized social welfare is monotone in the first period consumption.

\begin{remark}
Equilibrium social welfare conditional on first period consumption $x$ is
decreasing in $x$ when $u$ is submodular, and increasing when $u$ is
supermodular.
\end{remark}

\begin{proof}
Equilibrium social welfare conditional on first period consumption $x$ is
\begin{equation*}
W(x)=u(x,\hat{y}(x))-k(x+\hat{y}(x))
\end{equation*}Taking a derivative, we obtain
\begin{eqnarray*}
W^{\prime }(x) &=&u_{1}(x,\hat{y}(x))-k+\hat{y}^{\prime
}(x)u_{2}(x,\hat{y}(x))-k\hat{y}^{\prime }(x) \\
&=&\hat{y}^{\prime }(x)\left[ u_{2}(x,\hat{y}(x))-k\right] ,
\end{eqnarray*}where the second line follows from the first period first order condition,
equation (\ref{eq:focfirstperiod}). Since the second period consumption is
always distorted, the term in square brackets is always positive, and hence $W$ is increasing in $x$ when $\hat{y}$ is increasing in $x$ (i.e., when utility is supermodular), and
decreasing in $x$ when $\hat{y}$ is decreasing in $x$ (i.e., when utility is supermodular).
\end{proof}

\subsection{An example}\label{sec:numexample}

\begin{table}[tbp]
\begin{center}
\caption{Example}\label{tab:example2period}
\begin{tabular}{l l r r r r r r}
\hline\hline
								& 			& $x$ 	& $y$  	& $W$ 	& $\pi_1$ 	& $\pi_2$ 	& $V$ 	\\\hline 
\multirow{3}{*}{$a=-1$}			& main model& Fig.\ref{fig:a1} 	& Fig.\ref{fig:y1}  	& 6.99 	& 	[2.92, 3.00] 	& 2.92	& [1.08, 1.16] 	\\
								& first best& 1.00 	& 1.00  	& 7.00 	& - 	& - 	& - 	\\ 
								& observable $x$& 1.17 	& 0.97  	& 6.92 	& 4.08	& 2.84 	& 0.00 	\\\hline  
\multirow{3}{*}{$a=-3$} 		& main model& Fig.\ref{fig:a3} 	& Fig.\ref{fig:y3}  	& 5.39 	& [1.36, 1.81] 	& 1.31 	& [2.27, 2.72] 	\\ 
								& first best& 0.78 	& 0.78  	& 5.44 	& - 	& - 	& - 	\\ 
								& observable $x$& 1.17 	& 0.58  	& 5.10 	& 4.08 	& 1.02	& 0.00 	\\ \hline\hline
\end{tabular}
\end{center}
\end{table}

We now consider a numerical example, where $u$ is given by:
\begin{align*}
&u(x,y) = -3x^2-3y^2+axy+8x+8y
\end{align*}
The parameter $a = u_{21}(x,y)$ is a measure of substitutability of the past
and current consumption. We focus on the submodular case, so that $a$ is negative, and focus on two values,  $a=-1$ and $a=-3$.  The equilibrium values for the variables of interest are presented in Table~\ref{tab:example2period}. The equilibrium distributions of first period
consumption for the two cases are given in Figure \ref{fig:a}.

\begin{figure}
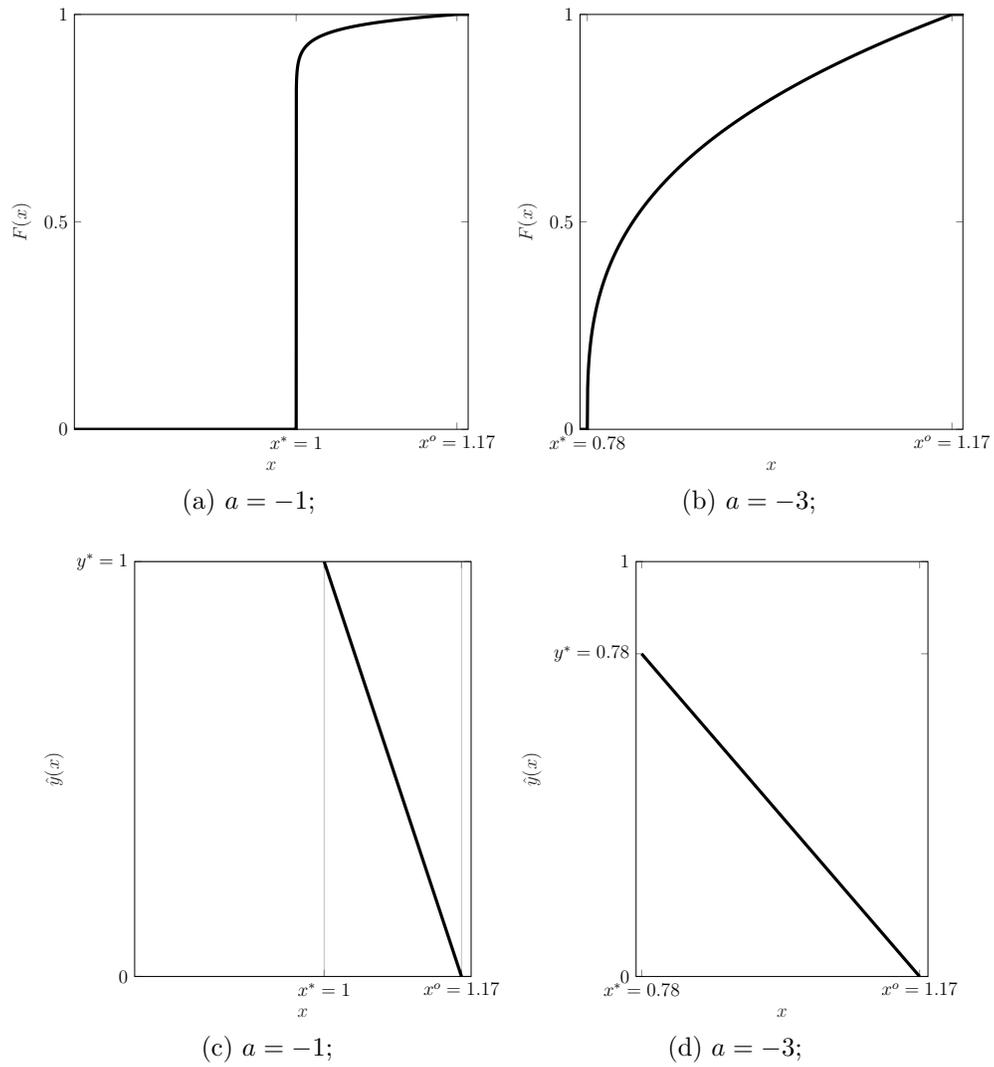

    \centering
    \subfloat[$a=-1$;\label{fig:a1}]{
        \input{fig_distr1.tex}}
	\subfloat[$a=-3$;\label{fig:a3}]{
        \input{fig_distr3.tex}}\\
    \subfloat[$a=-1$;\label{fig:y1}]{
        \input{fig_y1.tex}}
	\subfloat[$a=-3$;\label{fig:y3}]{
        \input{fig_y3.tex}}
\caption{Distribution of first period consumption (top row) and second period consumption (bottom row).}\label{fig:a}
\end{figure}

In these two cases, the equilibrium distribution is skewed to the left: most
of the consumers consume an amount close to the socially efficient one. This
is also reflected in the fact that efficiency loss in equilibrium, $W^{\ast}-W$, is small compared to the one in benchmark model with observable
consumption, $W^{\ast}-W^o$.

The distribution of the social welfare is also interesting: consumers and
the second period seller obtain higher payoffs under privacy than
 when past consumption is observable. The
profit comparison for the first period seller is the opposite of that: this
seller's profit is reduced by privacy.

Note that the first seller's
loss from privacy equals the  consumer's gain and the increase in  social welfare  accrues fully to 
the second seller, as one would expect from  the results in section~\ref{sec:payoffs}. 
The example shows that  the gain from privacy is much larger for the consumer than for 
seller 2.

\section{Concluding remarks}

We study optimal nonlinear pricing when transactions are private, in the presence of intertemporal substitutability and complementarity. We show that, even if consumers are ex ante identical, the equilibrium menus offered by sellers feature a large variety of bundle sizes paired with quantity discounts. These offerings give rise to endogenous taste heterogeneity across consumers.

In equilibrium, the seller who faces consumers without private information offers a two-part tariff to facilitate the creation of endogenous heterogeneity. Subsequently, the seller who serves consumers with endogenous private information offers a screening menu.

We derive testable relationship between the price dispersion and the degree of inter-temporal substitutability or complementarity. In our model, if past choices are not observed by the sellers, the consumers necessarily retain some surplus in the form of information rent. The informational content of the past choices depends on the degree of inter-temporal substitutability or complementarity and it affects the optimal menus offered by the sellers. Our substantive policy finding is that privacy improves welfare and consumer surplus, and mitigates the distortions due to seller monopoly power. However, in our model, firms will not voluntarily protect consumer privacy, even when consumers are sophisticated, since making transactions public is a way for upstream firms to enhance their monopoly power. Consequently, there is  a role for regulation. 

\appendix

\section{Appendix}\label{sec:nopurestr}

\subsection{Proof of Proposition~\ref{prop:nopurestr}}

If the consumption in period 1 is deterministic, say $\tilde{x}$, any best reply by seller 2 will include an item $(q,y)$ in the second period menu such that the consumer chooses this item and 
\begin{align*}
&u(\tilde{x},y)-q = u(\tilde{x},0)\\
&u_{2}(\tilde{x},y) = k.
\end{align*}
If such a menu is expected, seller 1 has a profitable deviation---he can induce a consumption $x>\tilde{x}$ in the first period and receive a higher profit as a result.  Formally, this is established by the following lemma.

\begin{lemma}\label{lemma:deviations}
Suppose the consumer buys $\tilde{x}$ in the first period and chooses one of the two different available options---$(q_1,y_1)$ and $(q_2,y_2)$---from the menu in the second period. If the consumer is indifferent between these two options---i.e., if
\[u(\tilde{x},y_1)-q_1=u(\tilde{x},y_2)-q_2,\]
then one of the sellers has a profitable deviation.
\end{lemma}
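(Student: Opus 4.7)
The plan is to show that, given the consumer's indifference in period 2, the joint surplus of seller 1 and the consumer, viewed as a function of the first-period quantity $x$, has a convex kink at $\tilde x$, so that a small first-period deviation strictly raises joint surplus --- which seller 1 can then appropriate with a two-part tariff.

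First I would reduce to the case $y_1\neq y_2$, since if $y_1=y_2$ indifference forces $q_1=q_2$ and the two items coincide; set $y_1<y_2$ without loss of generality. For each $i\in\{1,2\}$, define the auxiliary joint-surplus function
\[
g_i(x) := u(x,y_i)-q_i-kx,
\]
representing the payoff seller 1 and the consumer jointly obtain when the consumer chooses $x$ in period 1 and then picks item $(q_i,y_i)$ in period 2. Indifference at $\tilde x$ gives $g_1(\tilde x)=g_2(\tilde x)$. The crucial computation is
\[
g_1'(\tilde x)-g_2'(\tilde x)=u_1(\tilde x,y_1)-u_1(\tilde x,y_2),
\]
which is strictly positive under submodularity and strictly negative under supermodularity; in either case the two derivatives differ.

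Next I would define the envelope $\phi(x):=\max\{g_1(x),g_2(x)\}$, which satisfies $\phi(\tilde x)=g_1(\tilde x)$ and has a \emph{convex} kink at $\tilde x$: its one-sided derivatives coincide with $g_1'(\tilde x)$ and $g_2'(\tilde x)$, and since these are unequal, at least one of the two one-sided derivatives of $\phi$ at $\tilde x$ is bounded away from zero on the ``outward'' side. More precisely, writing $\phi'(\tilde x^+)$ and $\phi'(\tilde x^-)$ for the right and left derivatives (with the larger of $g_1',g_2'$ corresponding to the right-derivative in the submodular case, and vice versa for supermodular), one has $\phi'(\tilde x^+)>\phi'(\tilde x^-)$, so they cannot both vanish. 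It follows that there is a direction --- either $x>\tilde x$ or $x<\tilde x$ --- along which $\phi(x)>\phi(\tilde x)$ for all $x$ sufficiently close to $\tilde x$.

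Now let $J^{\ast}(x):=\max_{(q,y)\in M_2}\bigl[u(x,y)-q\bigr]-kx$ denote the true joint surplus when the consumer first picks $x$ and then optimizes over seller 2's menu $M_2$. Since $(q_1,y_1)$ and $(q_2,y_2)$ are both on the menu, $J^{\ast}(x)\geq \phi(x)$ for every $x$, while by hypothesis $J^{\ast}(\tilde x)=\phi(\tilde x)$ (the consumer's optimal second-period choice at $\tilde x$ is one of the two items, and both give the same value). Hence for some $\hat x$ near $\tilde x$,
\[
J^{\ast}(\hat x)\ \geq\ \phi(\hat x)\ >\ \phi(\tilde x)\ =\ J^{\ast}(\tilde x).
\]
To conclude I would exhibit seller 1's deviation: replace his tariff with a two-part tariff $(A,k)$, choosing $A$ just below $J^{\ast}(\hat x)$ minus the consumer's outside option (which, since seller 2's menu is fixed and firm 1's offer is unobserved, is unchanged by the deviation). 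The consumer's payoff-maximizing first-period choice then lies arbitrarily close to $\hat x$, delivering joint surplus $J^{\ast}(\hat x)$, of which seller 1 captures $A$. Comparing with seller 1's original profit, which is bounded above by $J^{\ast}(\tilde x)$ minus the same outside option, gives a strict improvement, contradicting equilibrium.

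The only delicate step is the last one: one must verify that the deviation remains profitable after the consumer optimizes over \emph{all} of seller 2's menu (not just $\{(q_1,y_1),(q_2,y_2)\}$) at the new $x$, and that the consumer's outside option at $x=0$ is indeed unchanged under the information structure. Both are handled by the inequality $J^{\ast}\geq \phi$ and by the assumption that seller 2 does not observe seller 1's offer, so off-path beliefs about seller 2's menu are unaffected.
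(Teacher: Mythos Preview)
Your argument is correct and shares the paper's core idea---the convex kink in the joint surplus $\Sigma$ at $\tilde x$---but you apply it more uniformly than the paper does. The paper first reduces to the case $u_2(\tilde x,y_2)\geq k$ (via a seller~2 deviation), then splits on which of the two items the consumer actually selects: if she picks $(q_1,y_1)$, the paper exhibits a seller~2 deviation (slightly lowering $q_2$ to attract her to the more profitable item); only if she picks $(q_2,y_2)$ does the paper invoke the kink argument for seller~1. You observe, correctly, that the kink argument works regardless of the consumer's tie-breaking, since $J^*(\tilde x)=\phi(\tilde x)$ either way by indifference. This lets you dispense with both the preliminary reduction and the case split, so seller~1 always has the deviation. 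Your route is therefore a genuine streamlining of the paper's proof; the paper's version has the minor advantage of identifying that seller~2 is \emph{also} not best-responding in one subcase, but this is not needed for the lemma as stated.

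One small point on your final paragraph: the two-part tariff is more machinery than you need. It suffices for seller~1 to offer the single bundle $(\hat x,p)$ with $p$ chosen so that the consumer's payoff strictly exceeds her equilibrium payoff $V$ while $p-k\hat x$ strictly exceeds $\pi_1$; this is feasible precisely because $J^*(\hat x)>J^*(\tilde x)=\pi_1+V$, and the consumer's outside option $U(0)\leq V$ is automatically respected.
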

\begin{proof}
We prove this lemma for the case of submodular utility $u(x,y)$. The proof for the case of supermodular utility $u(x,y)$ is very similar and therefore omitted. Without loss of generality, let $y_1<y_2$.

First, if $u_{2}(\tilde{x},y_2)< k$, the seller in period 2 can modify his menu to weakly increase his profit. To see that, consider the set 
\[X = \{x\mid u_{2}(x,y(x))< k,\mbox{ and } (x,y(x))\mbox{ is in the support of consumer's strategy}\}
\] 
If the seller replaces every item $y(x), x\in X$ with the item $y^{\ast}(x)$ (recall that $u_{2}(x,y^{\ast}(x))= k$) and reduces the price of this item by $u(x,y(x))-u(x,y^{\ast}(x))$, he will guarantee that a consumer with history $x\in X$ will purchase $y^{\ast}(x)$, and the profit from selling these items will be strictly higher since $u(x,y)-ky$ is maximized at $y^{\ast}(x)$. Therefore, we can restrict our attention to the case of $u_{2}(\tilde{x},y_2)\geq k$.

The difference in profit between the two options that are offered in the second period are
\[
q_2-ky_2-(q_1-ky_1) = u(\tilde{x},y_2)-u(\tilde{x},y_1)-k(y_2-y_1)>0.
\]
If the consumer chooses $(q_1,y_1)$, seller 2 can reduce $q_2$ by arbitrarily small but positive amount and, thus, increase his profit.

If the consumer chooses $(q_2,y_2)$, seller 1 has a profitable deviation. Indeed, if seller 1 induces $x>\tilde{x}$ instead of $\tilde{x}$, the consumer will choose either $(q_1,y_1)$ or another, even smaller bundle.
Similarly, if seller 1 induces $x<\tilde{x}$ instead of $\tilde{x}$, the consumer will choose either $(q_2,y_2)$ or another, even larger bundle.
Thus, the total payoff of the consumer and seller 1, $\Sigma(x),$ is bounded from below by

\begin{equation*}
\Sigma (x)\geq\tilde{\Sigma}(x):=\left\{
\begin{tabular}{l l}
$u(x,y_1)-q_1-kx$ & , if $x\geq \tilde{x}$ \\
$u(x,y_2)-q_2-kx$ & , if $x< \tilde{x}$.\end{tabular}\right.
\end{equation*}

and $\Sigma(\tilde{x}) = \tilde{\Sigma}(\tilde{x}) = u(\tilde{x},y_1)-q_1-k\tilde{x}$. The function $\tilde{\Sigma}(x)$ is continuous and $u$ is submodular, hence
\begin{equation*}
D_{+}\tilde{\Sigma} (\tilde{x})-D_{-}\tilde{\Sigma}(\tilde{x})=u_{1}(\tilde{x},y_1)-u_{1}(\tilde{x},y_2)>0.
\end{equation*}
Therefore $\tilde{\Sigma}(x)$  cannot achieve the maximum at $\tilde{x}$, and so cannot $\Sigma(x)$.
\end{proof}

\subsection{Proof of Lemma \ref{lemma:differentiable}}\label{app:diff}
Fix $x\in X,$ and $\hat{y}(x).$ Consider the payoff of the consumer in the second period, $U(x+\delta )$---this is well defined for $\delta$ sufficiently small
since $U$ is defined on the open interval $I.$ Since $x+\delta $ can choose
the contract chosen by the consumer with the first-period consumption $x,$
\begin{equation*}
U(x+\delta )\geq u(x+\delta,\hat{y}(x))-q(x)
\end{equation*}

Thus, for $\delta >0$

\begin{equation*}
\frac{U(x+\delta )-U(x)}{\delta }\geq \frac{u(x+\delta,\hat{y}(x))-u(x,\hat{y}(x))}{\delta }.
\end{equation*}
The above inequality implies
\begin{equation}
D_{+}U(x):=\lim \inf_{\delta \rightarrow 0+}\frac{U(x+\delta )-U(x)}{\delta }\geq u_{1}(x,\hat{y}(x)).  \label{U'+}
\end{equation}

Since the inequality for $\delta <0$ has a reversed sign, this yields\begin{equation}
D^{-}U(x):=\lim \sup_{\delta \rightarrow 0-}\frac{U(x+\delta )-U(x)}{\delta }\leq u_{1}(x,\hat{y}(x)).  \label{U'-}
\end{equation}

Now, the total payoff of seller 1 and consumer, $\Sigma (x),$ equals
\begin{equation*}
\Sigma (x)=U(x)-kx.
\end{equation*}

Define:
\[D^{+}U(x):=\lim \sup_{\delta \rightarrow 0+}\frac{U(x+\delta )-U(x)}{\delta },
\]
\[
D_{-}U(x):=\lim \inf_{\delta \rightarrow 0-}\frac{U(x+\delta
)-U(x)}{\delta }.
\]

If $x\in X,$ then, since $x$ is chosen, it must maximize $\Sigma (x),$ and
necessary conditions are
\begin{equation*}
\Sigma ^{+}(x)=D^{+}U(x)-k\leq 0,
\end{equation*}

\begin{equation*}
\Sigma ^{-}(x)=D_{-}U(x)-k\geq 0.
\end{equation*}

These inequalities imply $D^{+}U(x)\leq D_{-}U(x).$ In conjunction with the
inequalities (\ref{U'+}) and (\ref{U'-}), this implies that for any $x\in X,$
\begin{equation*}
D^{+}U(x)=D_{+}U(x)=D^{-}U(x)=D_{-}U(x)=u_{1}(x,\hat{y}(x)).
\end{equation*}

\subsection {Proof of Theorem \ref{t:eq}}\label{app:main}

We focus on the case of submodular utility. Since the argument is very similar when utility is supermodular, we omit it.

The proof consists of the following steps:
\begin{enumerate}
\item We derive necessary conditions for profit maximization in periods 1 and 2. We use these conditions to solve for the equilibrium menus and the distribution of the first period consumption $F$. When deriving the optimality conditions we conjecture and later verify that the distribution $F$ has at most one atom (if the atom exists it is at $x^{\ast}$).
\item We show that the solution we get for $F$ is a cumulative distribution function---i.e., it is non-decreasing and right-continuous.
\item We show that the menus that we find induce the conjectured consumer choices (or using the language of the theory of incentives, the menus are incentive-compatible).
\end{enumerate} 

First, we verify that the proposed distribution $F$ induces the second-period consumption $\hat{y}$ and indirect utility $U(x)$.  
The outside option of the consumer in the second period is $u(x,0)$. If the consumer consumes $\overline{x}$ in the first period, her continuation utility in the second period is $u(\overline{x},0)$. Therefore
\begin{equation*}
U(x) = u(\overline{x},0)-\int\limits_{x}^{\overline{x}}u_{1}(z,\hat{y}(z))dz.
\end{equation*}The price charged by the seller 2 for the bundle $\hat{y}(x)$ is
\begin{equation}
q(x)=u(x,\hat{y}(x))-u(\overline{x},0)+\int\limits_{x}^{\overline{x}}u_{1}(z,\hat{y}(z))dz. \label{eq:priceb}
\end{equation}Hence the expected profit for this seller is
\begin{equation*}
\int\limits_{\underline{x}}^{\overline{x}}\left[ u(x,\hat{y}(x))f(x)+u_{1}(x,\hat{y}(x))F(x)-k\hat{y}(x)f(x)\right]dx -u(\overline{x},0).
\end{equation*}Maximizing the above expression pointwise, we obtain that the distribution  $F$ must
satisfy the first order condition
\begin{equation}
u_{2}(x,\hat{y}(x))f(x)+u_{21}(x,\hat{y}(x))F(x)-kf(x)=0.
\label{eq:focsecondperiod}
\end{equation}

Seller 1 makes the consumer indifferent between the inside and the outside options, therefore, he charges a price for amount $x$ that equals

\begin{equation*}
p(x)=u(\underline{x},\hat{y}(\underline{x}))-u(0,\hat{y}(\underline{x}))+\int\limits_{\underline{x}}^{x}u_{1}(z,\hat{y}(z))dz.
\end{equation*}

In particular, the price for the bundle \underline{$x$} equals
\begin{equation*}
p(\underline{x})=u(\underline{x},\hat{y}(\underline{x}))-u(0,\hat{y}(\underline{x})).
\end{equation*}

Seller 1's profit from selling a bundle $x$ has to be independent of $x$.
Hence

\begin{equation}  \label{eq:focfirstperiod}
u_{1}(x,\hat{y}(x))-k = 0.
\end{equation}

Equations (\ref{eq:focfirstperiod}) and (\ref{eq:focsecondperiod}) pin down
unknown functions $F$ and $\hat{y}$.

\begin{lemma}
Suppose that $F$ and $\hat{y}$ solve equations (\ref{eq:focfirstperiod}) and (\ref{eq:focsecondperiod}). Then $F$ is a c.d.f.
and $\hat{y}$ is strictly decreasing (strictly increasing resp.) whenever $u$ is submodular (supermodular resp.).
\end{lemma}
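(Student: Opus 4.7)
I treat the submodular case; the supermodular case is symmetric, with all inequalities reversed and the roles of $\underline{x}$ and $\overline{x}$ swapped. The plan is to establish, in order, (i) strict monotonicity of $\hat{y}$ from the first-period FOC; (ii) a definite sign for $u_2(x,\hat{y}(x))-k$ on the support, which controls the sign of the density; (iii) that the explicit formula for $F$ then yields a bona fide CDF with the correct boundary values and no atoms.

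For (i) I implicitly differentiate equation (\ref{eq:focfirstperiod}), $u_1(x,\hat{y}(x))=k$, to obtain $\hat{y}'(x)=-u_{11}/u_{12}$. Strict concavity of $u$ gives $u_{11}<0$, and $u_{12}<0$ in the submodular case, so $\hat{y}'(x)<0$; in the supermodular case $u_{12}>0$, so $\hat{y}'(x)>0$. For (ii), rearrange (\ref{eq:focsecondperiod}) to
\[
f(x)\,[u_2(x,\hat{y}(x))-k]=-u_{21}(x,\hat{y}(x))\,F(x).
\]
The right-hand side has the sign of $-u_{21}$, which is strictly positive in the submodular case, so I must show $u_2(x,\hat{y}(x))>k$ on $(\underline{x},\overline{x}]$. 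The key observation is that $\hat{y}(x)$ satisfies $u_1=k$ while the efficient benchmark $y^{\ast}(x)$ satisfies $u_2=k$, and the strictly concave system $u_1=u_2=k$ has the unique solution $(x^{\ast},y^{\ast})$; hence $u_2(x,\hat{y}(x))=k$ only at $x=x^{\ast}=\underline{x}$. Because $\hat{y}(\overline{x})=0$ by Lemma~\ref{lemma:xo} and A3 gives $u_2(x^o,0)>k$, continuity of $u_2(\cdot,\hat{y}(\cdot))-k$ together with the no-crossing property just established pins down the sign as strictly positive throughout $(\underline{x},\overline{x}]$. Consequently $f(x)>0$ on $(\underline{x},\overline{x}]$.

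For (iii), I integrate the separable ODE $F'(x)/F(x)=-u_{21}/(u_2-k)$ with the terminal condition $F(\overline{x})=1$, obtaining precisely the formula in Theorem~\ref{t:eq}. Monotonicity and right-continuity on $(\underline{x},\overline{x}]$ are immediate from $f>0$ and the explicit expression. What remains, and is the main technical hurdle, is the endpoint behavior at $\underline{x}=x^{\ast}$: I need $F(\underline{x})=0$ (so that $F$ has no atom at the lower boundary and is a genuine CDF on $[\underline{x},\overline{x}]$). For this I Taylor-expand near $x^{\ast}$: since $\hat{y}(x^{\ast})=y^{\ast}$ and $\hat{y}'(x^{\ast})$ is finite and nonzero by step (i), the denominator $u_2(x,\hat{y}(x))-k$ vanishes to first order in $(x-x^{\ast})$ (with nonzero derivative $u_{21}+u_{22}\hat{y}'(x^{\ast})$, which one checks is nonzero using concavity and submodularity), while the numerator $u_{21}(x,\hat{y}(x))$ is bounded away from zero; hence the integrand $u_{21}/(u_2-k)$ behaves like $c/(x-x^{\ast})$ with $c<0$, so the integral in the exponent diverges to $-\infty$ as $x\downarrow x^{\ast}$, giving $F(x^{\ast})=0$. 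This simultaneously yields continuity of $F$ at $\underline{x}$, so no atom appears there, and completes the verification that $F$ is a valid continuous distribution function on $[\underline{x},\overline{x}]$.
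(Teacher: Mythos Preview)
Your argument is correct in its essentials, and your route to the sign condition $u_2(x,\hat{y}(x))>k$ on $(\underline{x},\overline{x}]$ differs from the paper's. The paper computes the derivative of $u_2(x,\hat{y}(x))$ directly,
\[
\frac{d}{dx}u_2(x,\hat{y}(x))=\frac{u_{11}u_{22}-u_{21}^2}{-u_{21}}\ \ge\ 0,
\]
using concavity (nonnegative Hessian determinant), so that $u_2(\cdot,\hat{y}(\cdot))$ is weakly increasing from $k$ at $\underline{x}$. Your continuity-plus-endpoint argument (unique first-best gives the only zero; A3 pins down the sign at $\overline{x}$) is equally valid and arguably cleaner, though it yields only the sign and not the monotonicity of $u_2(x,\hat{y}(x))$ that the paper obtains.

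Your step (iii) does more than the lemma requires and contains a small gap. A c.d.f.\ may have atoms, so showing $F(\underline{x})=0$ is not needed; the paper explicitly leaves open a possible atom at $\underline{x}$ (see its footnote). Your Taylor argument for $F(\underline{x})=0$ assumes $u_{21}+u_{22}\hat{y}'(x^\ast)\neq 0$, i.e.\ $u_{11}u_{22}-u_{21}^2\neq 0$ at $(x^\ast,y^\ast)$, which follows from a \emph{negative-definite} Hessian but not from strict concavity alone. In fact your conclusion is nonetheless correct: since $g(x):=u_2(x,\hat{y}(x))-k$ has a bounded derivative on $[\underline{x},\overline{x}]$, the mean value theorem gives $0<g(x)\le M(x-\underline{x})$, whence $\int_x^{\overline{x}}\tfrac{|u_{21}|}{g}\,dz\ge \tfrac{c}{M}\int_x^{\overline{x}}\tfrac{dz}{z-\underline{x}}\to\infty$ and $F(\underline{x}+)=0$. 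So you actually prove a slightly stronger statement than the paper, but the justification you give should be replaced by this Lipschitz bound rather than the first-order Taylor coefficient.
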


\begin{proof}
By taking a derivative of equation (\ref{eq:focfirstperiod}) with respect to
$x$ we get
\begin{equation*}
\hat{y}^{\prime}(x) = -\frac{u_{11}(x,\hat{y}(x))}{u_{21}(x,\hat{y}(x))}
\end{equation*}
therefore $\sign(\hat{y}^{\prime}(x)) = \sign(u_{21}(x,\hat{y}(x)))$.

The solution to equation (\ref{eq:focsecondperiod}) is
\begin{equation}
\ln F(x)=\int\limits_{x}^{\overline{x}}\frac{u_{21}(z,\hat{y}(z))}{u_{2}(z,\hat{y}(z))-k}dz  \label{eq:focsecondperiodsolution}
\end{equation}This solution is increasing in $x$ if $u_{2}(x,\hat{y}(x))\geq k$ for all $x\in \lbrack \underline{x},\overline{x}]$. Lower bound $\underline{x}$
solves $u_{2}(\underline{x},\hat{y}(\underline{x}))=k$ \citep[see ][]{Hellwig10}. Moreover, since $u$ is concave
\begin{align*}
\frac{d}{dx}u_{2}(x,\hat{y}(x))& =u_{21}(x,\hat{y}(x))+u_{22}(x,\hat{y}(x))\hat{y}^{\prime }(x) \\
& =u_{21}(x,\hat{y}(x))-\frac{u_{11}(x,\hat{y}(x))}{u_{21}(x,\hat{y}(x))}u_{22}(x,\hat{y}(x)) \\
& =\frac{u_{22}(x,\hat{y}(x))u_{11}(x,\hat{y}(x))-(u_{21}(x,\hat{y}(x)))^{2}}{-u_{21}(x,\hat{y}(x))}\geq0.
\end{align*}Therefore, $u_{2}(x,\hat{y}(x))$ is increasing in $x$ and $u_{2}(x,\hat{y}(x))\geq k$ for all $x\in \lbrack \underline{x},\overline{x}]$. To
summarize, the solution $F(x)$ is a non-decreasing right-continuous function and $F(\overline{x})=1$, therefore $F(x)$ is a c.d.f.\footnote{Strictly speaking, the ODE (\ref{eq:focsecondperiod}) has a solution (\ref{eq:focsecondperiodsolution}) on $(\underline{x},\overline{x}]$, but we can
right-continuously extend it to $\underline{x}$ with value of $\lim\limits_{x\to\underline{x}+0}\exp \left[ \int\limits_{x}^{\overline{x}}\frac{u_{21}(z,\hat{y}(z))}{u_{2}(z,\hat{y}(z))-k}dz\right]$. If this value is strictly positive, the distribution $F$ has an atom at $\underline{x}.$}
\end{proof}

The final step in the proof is established by the following lemma, that shows that the consumer will choose option $\hat{y}(x)$ from the second period menu if she consumed $x$ in the first period.

\begin{lemma}
\label{lemma:ic} If $\hat{y}(x)$ is decreasing and utility is submodular (or if $\hat{y}(x)$ is increasing and utility is supermodular), equation \eqref{eq:priceb} implies 
\begin{equation*}
u(x,\hat{y}(t))-q(t)\leq u(x,\hat{y}(x))-q(x)
\end{equation*}for all $x,t\in[\underline{x},\overline{x}]$.
\end{lemma}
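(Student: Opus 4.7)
The plan is to convert the incentive compatibility claim into a single-variable optimization problem and then exploit the monotonicity of $\hat{y}$ together with the sign of $u_{21}$. Define $V(x,t):=u(x,\hat{y}(t))-q(t)$ on $[\underline{x},\overline{x}]\times[\underline{x},\overline{x}]$; the desired inequality is $V(x,t)\le V(x,x)$ for all admissible $x,t$. The natural approach is to show that for each fixed $x$ the map $t\mapsto V(x,t)$ has a unique critical point at $t=x$ and that this is a global maximum, which would immediately yield the lemma.

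The first concrete step is to compute $q'(t)$ directly from \eqref{eq:priceb}. Differentiating and using the fundamental theorem of calculus, the term $u_1(t,\hat{y}(t))$ arising from the upper limit cancels the $u_1$ coming out of $\partial_t u(t,\hat{y}(t))$, leaving
\begin{equation*}
q'(t)=u_2(t,\hat{y}(t))\,\hat{y}'(t).
\end{equation*}
Consequently
\begin{equation*}
\frac{\partial V}{\partial t}(x,t)=\bigl[u_2(x,\hat{y}(t))-u_2(t,\hat{y}(t))\bigr]\hat{y}'(t),
\end{equation*}
which vanishes at $t=x$, confirming that $t=x$ is a critical point. This is the heart of the argument; the envelope-type cancellation is what makes the formula for $q(\cdot)$ implementable.

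The second step is the sign analysis. In the submodular case, $u_{21}<0$ (so $u_2(\cdot,y)$ is decreasing in its first argument) and Lemma~\ref{lemma:monotonicity} gives $\hat{y}'<0$. Therefore for $t>x$ the bracket $u_2(x,\hat{y}(t))-u_2(t,\hat{y}(t))$ is strictly positive while $\hat{y}'(t)<0$, forcing $\partial_t V<0$; for $t<x$ both factors flip sign, so $\partial_t V>0$. Hence $V(x,\cdot)$ is strictly increasing on $[\underline{x},x]$ and strictly decreasing on $[x,\overline{x}]$, which yields $V(x,t)\le V(x,x)$ with equality only at $t=x$. The supermodular case is handled by the same argument with $u_{21}>0$ and $\hat{y}'>0$: both sign flips compose to the same conclusion that $t=x$ is the unique maximizer.

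I do not expect any serious obstacles, since the proof is essentially a one-line envelope computation followed by two-case sign bookkeeping; the only mild subtlety is making sure the computation of $q'(t)$ is carried out cleanly so that $u_1(t,\hat{y}(t))$ really cancels, and to note that the endpoints $\underline{x},\overline{x}$ are included by continuity of $V(x,\cdot)$. Since the claim is an inequality on a compact interval and $t=x$ is the unique interior critical point with the correct sign of $\partial_t V$ on each side, the argument is complete.
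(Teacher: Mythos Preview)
Your proof is correct. Both your argument and the paper's exploit the same structural ingredients---the envelope formula \eqref{eq:priceb} for $q(\cdot)$ together with submodularity and monotonicity of $\hat{y}$---but they package them differently. You differentiate: after computing $q'(t)=u_2(t,\hat{y}(t))\hat{y}'(t)$ you obtain $\partial_t V(x,t)=[u_2(x,\hat{y}(t))-u_2(t,\hat{y}(t))]\hat{y}'(t)$ and read off the sign from $u_{21}$ and $\hat{y}'$. The paper instead works directly with the integral: it writes $q(t)-q(x)$ using \eqref{eq:priceb}, then bounds $\int_t^x u_1(z,\hat{y}(z))\,dz\ge \int_t^x u_1(z,\hat{y}(t))\,dz$ via $u_{12}<0$ and monotonicity of $\hat{y}$, and telescopes. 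So the paper uses submodularity on the $u_1$ side, you use it on the $u_2$ side; one is the integrated form, the other the differentiated form of the same envelope identity.

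The paper's route is marginally more robust in that it only needs $\hat{y}$ monotone, whereas yours needs $\hat{y}'$ to exist. In the present setting that is harmless because $\hat{y}$ is defined implicitly by $u_1(x,\hat{y}(x))=k$ with $u\in C^2$ and $u_{12}\neq 0$, so $\hat{y}\in C^1$; but it is worth flagging that your argument silently uses this regularity, which the lemma statement alone (``$\hat{y}$ decreasing'') does not guarantee.
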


\begin{proof}
Consider $x>t$. Since $\hat{y}(x)$ is decreasing and utility is submodular (or,
alternatively, $\hat{y}(x)$ is increasing and utility is supermodular), we have
\begin{align*}
& q(t)-q(x)=u(t,\hat{y}(t))-u(x,\hat{y}(x))+\int\limits_{t}^{x}u_{1}(z,\hat{y}(z))dz\geq \\
& u(t,\hat{y}(t))-u(x,\hat{y}(x))+\int\limits_{t}^{x}u_{1}(z,\hat{y}(t))dz=
\\
& u(t,\hat{y}(t))-u(x,\hat{y}(x))+u(x,\hat{y}(t))-u(t,\hat{y}(t))= \\
& u(x,\hat{y}(t))-u(x,\hat{y}(x)).
\end{align*}The case with $x<t$ is identical.
\end{proof}

\subsection{Proof of Theorem~\ref{t:uniqueness}}\label{app:uniqueness}

Our proof hinges on two facts that have been established. First, for any
equilibrium $\sigma $ with support $X_{\sigma}:U_{\sigma}(x)=U(x),$ and
second, $\hat{y}_{\sigma}(x)$, the bundle consumed in the second period in equilibrium $\sigma$ by the consumer who consumed $x$ in the past, is uniquely determined and coincides with that under $\tilde{\sigma}:\hat{y}(x)$. In other words, the payoff and the second-period consumption for any chosen first-period consumption is the same across all equilibria.

The argument is completed via the following lemma, that shows that any equilibrium distribution cannot have either gaps in the support or atoms (except for a possible atom at $x^{\ast}$). 

\begin{lemma}\label{lemma:nogap}
Let $\sigma$ be an equilibrium, and let $G$ denote the c.d.f. of first period consumption corresponding to $\sigma$.
The support of $G$ equals  $[x^{\ast
},x^{o}]$, and $G$ cannot have atoms. 
\end{lemma}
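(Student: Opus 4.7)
The plan is to settle the endpoints first, then dispose of gaps via a seller 2 deviation based on slack in cross-gap incentive compatibility, and finally rule out atoms by a variant of the same argument.

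First, the endpoints $\underline{x}_\sigma = x^{\ast}$ and $\overline{x}_\sigma = x^{o}$ are inherited from Lemmas \ref{lemma:xo} and \ref{lemma:x*}, so $X_\sigma \subseteq [x^{\ast}, x^{o}]$ with both endpoints attained. By Lemma \ref{lemma:monotonicity}, the second-period allocation on $X_\sigma$ solves $u_1(x, \hat{y}_\sigma(x)) = k$, so $\hat{y}_\sigma$ is strictly decreasing; by Lemma \ref{lemma:samepayoff}, $U_\sigma(x) - kx = \Sigma$ on $X_\sigma$. These two facts are the workhorses for the rest of the argument.

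Next I would rule out gaps. Suppose, toward contradiction, there is a maximal open gap $(a, b)\subset(x^{\ast},x^{o})$ with $a,b\in X_\sigma$. The key observation is that the cross-gap IC from $a$ against mimicking $b$ is \emph{strictly} slack. Indeed, consumer indifference on $X_\sigma$ gives $U_\sigma(a)-U_\sigma(b) = k(a-b)$, while the deviation payoff differential equals
\begin{equation*}
u(a,\hat{y}_\sigma(b))-u(b,\hat{y}_\sigma(b)) \;=\; -\int_a^b u_1(z,\hat{y}_\sigma(b))\,dz \;<\; k(a-b),
\end{equation*}
because for $z\in(a,b)$ submodularity together with $\hat{y}_\sigma(b)<\hat{y}_\sigma(z)$ gives $u_1(z,\hat{y}_\sigma(b)) > u_1(z,\hat{y}_\sigma(z))=k$. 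A parallel calculation shows the IC of $b$ against $a$ is strictly slack, and (using $\hat{y}_\sigma(x)>0$ together with concavity) IR for every $x\in[x^{\ast},a]$ is strictly slack as well. Consequently seller 2 can strictly improve his profit by uniformly raising $q(x)$ by a small $\delta>0$ for every $x$ in the lower component $[x^{\ast},a]$: the slack just derived guarantees that IC and IR remain satisfied for sufficiently small $\delta$, and the expected revenue gain is $\delta\cdot G(a) > 0$ under the original $G$ (which is how deviations are scored when seller 2 does not observe seller 1's offer). This contradicts the equilibrium property and yields the desired ``no gap'' conclusion, so $X_\sigma = [x^{\ast}, x^{o}]$.

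Finally, for the absence of atoms on $(x^{\ast},x^{o}]$, I would adapt the same idea. At an interior atom $x_0$, the mass $\mu>0$ at $x_0$ allows seller 2 to consider a local deviation that raises $q$ on a left-neighborhood $[x_0-\varepsilon,x_0]$; the cross-slack calculation above, applied to the "micro-gap" separating the atom from its immediate neighbors, produces strictly positive profit gain $\approx\delta\mu$ with only a second-order IC cost. For an atom at $x_0=x^{o}$, Assumption A3 ($u_2(x^{o},0)>k$) means seller 2 can add a bundle with $\hat{y}>0$ priced to be just preferred by the type-$x^{o}$ atom consumers over the outside option, strictly increasing profit from the atom. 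The main obstacle in the argument is the interior-atom case: unlike a gap between two intervals, the left and right adjacent IC constraints are both (nearly) binding, so the local deviation must be chosen carefully to balance the atom revenue gain against the rents conceded to infinitesimally lower types; concretely, I would show that the leading-order effect of a uniform $\delta$-increase of $q$ on $[x_0-\varepsilon,x_0]$ is $\delta\mu$ (from the atom) minus an $O(\delta\varepsilon)$ IC cost, giving a strict gain as $\varepsilon\downarrow 0$.
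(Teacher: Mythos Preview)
Your no-gap argument is sound and takes a different route from the paper. The paper observes that if seller~2 is optimizing against a distribution with a gap $(x_1,x_2)$, then standard screening forces the downward IC across the gap to bind, so type~$x_1$ is indifferent between two distinct bundles; this triggers Lemma~\ref{lemma:deviations}. You instead exploit Lemma~\ref{lemma:samepayoff} directly to compute the cross-gap slack and exhibit an explicit seller~2 price-raise on $[x^\ast,a]$. Both work; yours is more self-contained, the paper's is shorter because it recycles Lemma~\ref{lemma:deviations}.

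Your interior-atom argument, however, has a real gap. You claim that raising $q$ by $\delta$ on $[x_0-\varepsilon,x_0]$ yields a first-order gain $\delta\mu$ from the atom. It does not: because Lemma~\ref{lemma:monotonicity} forces $\hat y_\sigma$ to satisfy $u_1(x,\hat y_\sigma(x))=k$ and hence to be \emph{continuous} at $x_0$, the IC slack of type $x_0$ against bundles $\hat y_\sigma(x')$ with $x'\downarrow x_0$ is
\[
\int_{x_0}^{x'}\bigl[u_1(z,\hat y_\sigma(x'))-k\bigr]\,dz \;=\; O\bigl((x'-x_0)^2\bigr),
\]
which is arbitrarily small. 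So for any $\delta>0$ the atom-type immediately switches to a bundle with $x'>x_0$, and since $q(x)-k\hat y(x)$ is decreasing in $x$ on $[x^\ast,x^o]$ (differentiate and use $u_1=k$, $u_2>k$, $\hat y'<0$), this switch is a \emph{loss} for seller~2, not a gain. There is no ``micro-gap'' creating slack on the right side of the atom; that is precisely what distinguishes an interior atom from a true gap. Your own caveat (``the left and right adjacent IC constraints are both nearly binding'') identifies the issue but the proposed $\delta\mu$ vs.\ $O(\delta\varepsilon)$ accounting does not survive it.

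The paper handles atoms by a different mechanism: it cites Hellwig (2010) to the effect that any optimal screening allocation must be \emph{discontinuous} at an atom of the type distribution, and then observes that Lemma~\ref{lemma:monotonicity} pins $\hat y_\sigma$ down as the continuous function solving $u_1(x,\hat y(x))=k$ on all of $X_\sigma$ (now known to be the full interval), giving the contradiction. To repair your direct approach you would need a deviation that perturbs the \emph{allocation} at the atom rather than just prices---essentially rederiving the Hellwig discontinuity by hand. Your argument for an atom at $x^o$ is fine.
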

\begin{proof}
Suppose, by contradiction, that  $G$ has a gap $(x_1,x_2)$ in its support. Sequential rationality for the second-period seller implies that the consumer who consumed $x_1$ in the first period is indifferent between the item that she chooses in the second period and the item that is chosen by the consumer with a history $x_2$. This violates Lemma~\ref{lemma:deviations}. Therefore, distribution $G$ cannot have gaps in its support.

Suppose, by contradiction, that  $G$ has an atom at $\tilde{x}\neq x^{\ast}$. \cite{Hellwig10} establishes that in that case $\hat{y}(x)$---consumption in the second period as a function of the consumer's history---is discontinuous at $\tilde{x}$. Since $G$ has no gaps in its support, by Lemma~\ref{lemma:monotonicity}, $\hat{y}(x)$ must be continuous on $[x^{\ast},x^{o}]$ hence the contradiction.
\end{proof}

\bibliographystyle{aer}
\bibliography{gjm-vb-refs}

\end{document}